\numberwithin{equation}{section}
\newtheorem{thm}{Theorem}[section]
\newtheorem{prop}[thm]{Proposition}
\newtheorem{cor}[thm]{Corollary}
{ \theoremstyle{definition}
\newtheorem{rmk}[thm]{Remark}}
\newcommand{\Tr}{\mathop{\rm Tr}}
\newcommand{\Det}{\mathop{\rm Det}}
\newcommand{\cV}{\mathcal{V}}
\newcommand{\cL}{\mathcal{L}}
\newcommand{\bu}{\bar u}
\newcommand{\bv}{\bar v}
\newcommand{\lam}{\lambda}
\newcommand\cA{{\mathcal A}}
\newcommand\cK{{\mathcal K}}
\newcommand\cH{{\mathcal H}}
\newcommand{\kb}{\tilde \kappa}
\newcommand{\ka}{ \kappa}
\newcommand{\kp}{ \kappa^+}
\newcommand{\km}{ \kappa^-}
\newcommand{\kab}{\tilde \kappa}
\newcommand\CC{\mathbb C}
\newcommand\II{\mathbb I}
\def\<{\langle}
\def\>{\rangle}
\def\rvec{|0\>}
\def\rvech{|\hat 0\>}
\def\bbb{\mathbb{B}}
\begin{document}
\allowdisplaybreaks

\newcommand{\arXivNumber}{1804.00597}

\renewcommand{\PaperNumber}{054}

\FirstPageHeading

\ShortArticleName{Modified Algebraic Bethe Ansatz: Twisted XXX Case}
\ArticleName{Modified Algebraic Bethe Ansatz: Twisted XXX Case}

\Author{Samuel BELLIARD~$^{\dag\ddag}$, Nikita A.~SLAVNOV~$^\S$ and Benoit VALLET~$^\ddag$}

\AuthorNameForHeading{S.~Belliard, N.A.~Slavnov and B.~Vallet}

\Address{$^\dag$~Sorbonne Universit\'e, CNRS, Laboratoire de Physique Th\'eorique et Hautes Energies,\\
\hphantom{$^\dag$}~LPTHE, F-75005, Paris, France}
\EmailD{\href{mailto:samuel.belliard@gmail.com}{samuel.belliard@gmail.com}}

\Address{$^\ddag$~Institut de Physique Th\'eorique, DSM, CEA, URA2306 CNRS Saclay,\\
\hphantom{$^\ddag$}~F-91191, Gif-sur-Yvette, France}
\EmailD{\href{mailto:benoit.vallet@u-psud.fr}{benoit.vallet@u-psud.fr}}

\Address{$^\S$~Steklov Mathematical Institute of Russian Academy of Sciences, Moscow, Russia}
\EmailD{\href{mailto:nslavnov@mi.ras.ru}{nslavnov@mi.ras.ru}}

\ArticleDates{Received April 09, 2018, in final form May 28, 2018; Published online June 07, 2018}

\Abstract{We prove the modified algebraic Bethe Ansatz characterization of the spectral problem for the closed XXX Heisenberg spin chain with an arbitrary twist and arbitrary positive (half)-integer spin at each site of the chain. We provide two basis to characterize the spectral problem and two families of inhomogeneous Baxter T-Q equations. The two families satisfy an inhomogeneous quantum Wronskian equation.}

\Keywords{integrable spin chain; algebraic Bethe ansatz; Baxter T-Q equation; quantum Wronskian equation}

\Classification{82R53; 81R12}

\vspace{-1mm}

\section{Introduction}

The question of the spectral problem of the lattice quantum integrable models without ${\rm U}(1)$ symmetry, as the XXZ Heisenberg spin chain on the segment with the most general integrable boundary condition \cite{Skl88}, has led to the development of new techniques to perform the Bethe ansatz. Among the proposed methods such as the {\it off diagonal Bethe ansatz} \cite{CYSW13c} or the {\it separation of the variables} \cite{Der, Skl85}, the {\it modified algebraic Bethe ansatz} (MABA) proposes to understand this new method from the algebraic Bethe ansatz point of view. In this paper we study this method and determine what the changes are with respect to the case with ${\rm U}(1)$ symmetry and what has to be kept.

The new features in the Bethe ansatz for models without ${\rm U}(1)$ symmetry are as follows. Firstly, the Baxter T-Q equation has a new term~\cite{CYSW13a,CYSW13b}. Secondly, the Bethe vectors are linear combinations of all Bethe vectors of the associated model where the ${\rm U}(1)$ symmetry is restored, and these vectors are factorized in terms of a modified creation operators~\cite{Bel14,BP152, BC13}, that was proved in \cite{ZLCYSW15b, ZLCYSW15a} by means of the Baxter T-Q equation\footnote{Contrary to this method, the MABA provides a constructive way to build Bethe vectors and clarifies the algebraic origin of the new term in the Baxter T-Q equation.}. Thirdly, the number of Bethe roots is fixed and depends on the model under consideration~\cite{CYSW13b, Nep}.

The MABA also can be applied to the models with quasi-periodic boundary conditions such as the XXX spin Heisenberg chain with an arbitrary twist. In particular, one can use this approach for studying XXX spin-$\tfrac 12$ chain
\begin{gather} \label{HXXX}
H= \sum_{k=1}^N\big(\sigma^{x}_k\otimes\sigma^{x}_{k+1}+\sigma^{y}_k\otimes\sigma^{y}_{k+1}+\sigma^{z}_k\otimes\sigma^{z}_{k+1}\big),
\end{gather}
subject to the following non-diagonal boundary conditions:
\begin{gather*}
 %\label{cs1}
 \gamma \sigma^{x}_{N+1}= \frac{ \kab^2+ \kappa^2-\kappa_+^2-\kappa_-^2}{2}\sigma^{x}_1 +i\frac{\kappa^2-\kab^2-\kappa_+^2+\kappa_-^2}{2}\sigma^{y}_1 +(\kappa \kappa_- -\kab \kappa_+)\sigma^{z}_1, \\
 %\label{cs2}
 \gamma \sigma^{y}_{N+1}= i\frac{\kab^2- \kappa^2-\kappa_+^2+\kappa_-^2}{2}\sigma^{x}_1 +\frac{\kab^2+\kappa^2+\kappa_+^2+\kappa_-^2}{2}\sigma^{y}_1 -i(\tilde \kappa \kappa_++\kappa \kappa_-)\sigma^{z}_1, \\
 %\label{cs3}
 \gamma \sigma^{z}_{N+1}=(\kappa \kappa_+-\tilde \kappa \kappa_-)\sigma^{x}_1 +i(\tilde \kappa \kappa_-+\kappa \kappa_+)\sigma^{y}_1 +(\tilde \kappa \kappa+\kappa_+ \kappa_-)\sigma^{z}_1.
\end{gather*}
The twist parameters $\{\kappa,\tilde \kappa,\kappa_+,\kappa_-\} \in \CC^4$ are generic and $\gamma =\kab\kappa -\kappa_+\kappa_-$. The Pauli matrices\footnote{$\sigma^{z}=\left(\begin{smallmatrix}
 1 & 0\\
 0 & -1 \end{smallmatrix}\right)$, $\sigma^{+}=\left(\begin{smallmatrix}
 0 & 1\\
 0 & 0 \end{smallmatrix}\right)$, $\sigma^{-}=\left(\begin{smallmatrix}
 0 & 0\\
 1 & 0 \end{smallmatrix}\right)$, $\sigma^{x}=\sigma^{+}+\sigma^{-}$, $\sigma^{y}=i(\sigma^{-}-\sigma^{+}) $.}~$\sigma^{\alpha}_k$ with $\alpha=x,y,z$ act non-trivially on the $k$th component of the quantum space $\cH= \otimes_{k=1}^N V_k$ with $V_k=\CC^2$. This model provides a simple example where the new properties of the method can be studied. In particular, in the case of the Hamiltonian~\eqref{HXXX}, several conjectures about the MABA were formulated in~\cite{BP15}. There, a special transformation of the twist matrix was proposed. One more conjecture of~\cite{BP15}, formulated on the base of direct calculations for the chains of small length, concerns a special off-shell action of the modified creation operator, which generates a~new term of the Baxter T-Q equation.

In this paper we consider the most general case of the closed XXX Heisenberg spin chain with an arbitrary twist and arbitrary positive (half)-integer spins transfer matrix\footnote{Explicit Hamiltonian for this case is not known. It can be formulated in the more restrictive case of $L_0$-regular spin chains~\cite{BCR09}, where $L_0$ arbitrary positive (half)-integer spins are periodically repeated. The case $L_0=1$ with spin $\frac12$ corresponds to the Hamiltonian \eqref{HXXX}.}. We prove the conjecture about the multiple action of the modified creation operator on the pseudo-vacuum state given in~\cite{BP15}. In fact, we prove this property independently of the action on the pseudo-vacuum state and go beyond the proofs done for other models \cite{ABGP15,Cram14,Cram17}. Moreover, we consider two different bases for solving the spectral problem of this family of models and we relate the two solutions by a modified quantum Wronskian equation.

This paper is organized as follows. In Section~\ref{S:RTT} we introduce a relevant notation and recall the main tools to implement the ABA: the Yangian of $\mathfrak{gl}_2$ and the associated representation to define arbitrary positive (half)-integer spins transfer matrix. In Section~\ref{S:ABA} we recall, for the case with diagonal twist ($\kp=\km=0$), the ABA scheme and the associated quantum Wronskian equation. Section~\ref{S:MABA} is devoted to the case with non-diagonal twist ($\kp\neq 0$ and $\km\neq 0$) and to the study of the MABA scheme. Here we give two basis for the Bethe vectors as well as their corresponding spectrum, which allows us to obtain the modified quantum Wronskian equation. Concluding remarks are given in Section~\ref{Conc}.

\section[The Yangian of $\mathfrak{gl}_2$ and Bethe subalgebra]{The Yangian of $\boldsymbol{\mathfrak{gl}_2}$ and Bethe subalgebra}\label{S:RTT}
Let us recall the simplest rational $R$-matrix, which is an element of $\operatorname{End}\big(\CC^2\otimes \CC^2\big)$:
 \begin{gather*} \label{Rm}
 R(u)=\frac{u}{c} I+ P.
 \end{gather*}
Here $c$ is a constant, $I=\sum\limits_{i,j=1}^2 E_{ii}\otimes E_{jj}$ is the identity operator on $\CC^2\otimes \CC^2$, $P=\sum\limits_{i,j=1}^2 E_{ij}\otimes E_{ji}$ is the permutation operator on $\CC^2\otimes \CC^2$, and $(E_{ij})_{kl}=\delta_{ik}\delta_{jl}$. This $R$-matrix is a~solution of the Yang--Baxter equation
 \begin{gather*}
 R_{ab}(u-v) R_{ac}(u-w) R_{bc}(v-w)= R_{bc}(v-w) R_{ac}(u-w) R_{ab}(u-v).
 \end{gather*}
Here the $R$-matrix is understood as an element of $\operatorname{End}(V_a \otimes V_b \otimes V_c)$ with $V=\CC^2$. Namely, $ R_{ab}(u) =R(u)\otimes \II$, $R_{bc}(u) =\II \otimes R(u)$, $R_{ac}(u)=P_{bc}R_{ab}(u)P_{bc}$, and~$\II$ is the identity operator in~$\CC^2$.

The $R$-matrix \eqref{Rm} is $\mathfrak{gl}_2$-invariant, and thus,
\begin{gather*}
[R_{ab}(u),K_a K_b]=0,
\end{gather*}
for any matrix $K\in \operatorname{End}\big(\CC^2\big)$.

Starting from this $R$-matrix, we can define a quantum group algebra, so called Yangian of~$\mathfrak{gl}_2$. For this, we introduce a monodromy matrix
\begin{gather*}%\label{MonoT}
T(u)=\left(\begin{matrix} t_{11}(u)& t_{12}(u)\\ t_{21}(u)&t_{22}(u)\end{matrix}\right).
\end{gather*}
This matrix satisfies the RTT relation
\begin{gather}\label{RTT}
 R_{ab}(u-v)T_a(u)T_b(v)=T_b(v)T_a(u) R_{ab}(u-v).
\end{gather}
Equation \eqref{RTT} encodes the commutation relations of the entries $t_{ij}(u)$ (see Appendix~\ref{ComY}), which generate the Yangian algebra.

The transfer matrix
\begin{gather*} %\label{trans}
 t(u)=\Tr_a\big(K_aT_a(u)\big)
\end{gather*}
generates a commutative subalgebra of the Yangian of $\mathfrak{gl}_2$, i.e., $[t(u),t(v)]=0$. Its expansion as a~series in~$u$ leads to the conserved charges of the model that depend on the specific representation. To perform the MABA we need to consider a finite dimensional representation. However, this is not necessarily needed for the usual ABA.

Let us define a highest weight representation by $\cV(\lambda_1(u), \lambda_2(u))$, where $\lambda_i(u)$ are some complex valued functions, and the highest weight vector $\rvec$ is defined by
\begin{gather}\label{HWRG}
t_{ii}(u)\rvec=\lambda_i(u)\rvec, \qquad t_{21}(u)\rvec=0.
\end{gather}
On the other hand, we can define a lowest weight representation $\cV(\lambda_2(u), \lambda_1(u))$, and the lowest weight vector $\rvech$ is defined by
\begin{gather}\label{tbvac}
t_{ii}(u)\rvech=\lambda_{3-i}(u)\rvech, \qquad t_{12}(u)\rvech=0.
\end{gather}

This provides enough tools to consider the usual ABA but not to accomplish the MABA.
In the latter case we need to explicitly fix the representation to be finite dimensional. All of them are given by the highest (lowest) weight representation \cite{Dr}.
More precisely, we consider finite dimensional representations with an arbitrary positive (half)-integer spin $s_i$ for each site of the chain.
In this case the weight vectors and the functions $\lambda_i(u)$ can be constructed from the $\mathfrak{gl}_2$ Lie algebra's spin $s$ finite dimensional representation and the evaluation representation of the Yangian $\mathfrak{gl}_2$.

 The $\mathfrak{gl}_2$ Lie algebra is generated by
 $\{S^z,S^-,S^+,I\}$
that satisfy the commutation relations
\begin{gather}\label{comgl2}
[S^+,S^-]=2S^z,\qquad [S^z,S^\pm]=\pm S^\pm, \qquad [S^i,I]=0.
\end{gather}
The spin $s$ finite dimensional representation of the $\mathfrak{gl}_2$ Lie algebra is given by the highest (lowest) weight representation. Let us define the
highest weight vector of spin $s$ representation by $|s\rangle$. The action of the $\mathfrak{gl}_2$ generators
are given by
\begin{gather*}
S^z|s\rangle=s|s\rangle, \qquad S^+|s\rangle=0,\qquad (S^- )^{2s+1}|s\rangle=0,\qquad I|s\rangle=|s\rangle.
\end{gather*}
The action of the $\mathfrak{gl}_2$ generators on the lowest weight vector $|{-}s\rangle$ of spin $s$ representation is
\begin{gather*}
S^z|{-}s\rangle=-s|{-}s\rangle, \qquad S^-|{-}s\rangle=0,\qquad (S^+ )^{2s+1}|{-}s\rangle=0,\qquad I|{-}s\rangle=|{-}s\rangle.
\end{gather*}
The lowest and the highest vector are related by the formula
\begin{gather*}
 |{-}s\rangle= \frac{ (S^- )^{2s}}{\prod\limits_{j=1}^{2s}\sqrt{j(2s-(j-1))}}|s\rangle.
\end{gather*}
To construct the representation of the Yangian of $\mathfrak{gl}_2$ we take the highest (lowest) weight vector as a tensor product of~$\mathfrak{gl}_2$ highest (lowest) weight vectors with different spins. They are given by
\begin{gather*}%\label{hwv}
\rvec=\bigotimes_{i=1}^N |s_i\rangle,\qquad \rvech=\bigotimes_{i=1}^N |{-}s_i\rangle.
\end{gather*}
The associated weight functions are
\begin{gather*}%\label{lam}
\lambda_1(u)=\prod_{i=1}^N\frac{u-\theta_i+c\big(s_i+\frac{1}{2}\big)}{c}, \qquad \lambda_2(u)=\prod_{i=1}^N\frac{u-\theta_i-c\big(s_i-\frac{1}{2}\big)}{c}.
\end{gather*}
The evaluation representation is given by the Lax operator
\begin{gather*}
\cL_{ai}(u)=\left(\begin{matrix} \dfrac{u}{c}+\dfrac{1}{2}+S_i^3& S_i^-\vspace{1mm}\\ S_i^+&\dfrac{u}{c}+\dfrac{1}{2}-S_i^3\end{matrix}\right)_a,
\end{gather*}
in each site of the spin chain (see, e.g.,~\cite{Skl85}). Due to the commutation relations~\eqref{comgl2}, the Lax operator satisfies the RTT relation~\eqref{RTT} and the $\mathfrak{gl}_2$ invariance~\cite{Skl85}
\begin{gather*}
[\cL_{ai}(u),K_a\cK_i]=0.
\end{gather*}
Here
\begin{gather*}
\cK_i=k_0 \exp{\big(k^+S_i^{+}+k^-S_i^{-}+2 k_3S_i^{3}\big)},\\
K=k_0 \exp{\big(k^+\sigma^{+}+k^-\sigma^{-}+k_3\sigma^{3}\big)}=\left(\begin{matrix} \tilde \kappa & \kappa^+\\ \kappa^-&\kappa \end{matrix}\right),
\end{gather*}
and $\sigma^\pm$, $\sigma^z$ are Pauli matrices.

The monodromy matrix for $N$ arbitrary positive (half)-integer spins $\bar s=\{s_1,\dots,s_N\}$ is then given by
\begin{gather}\label{Tmonorep}
T_a(u)=\cL_{a1}(u-\theta_1)\cdots \cL_{aN}(u-\theta_N),
\end{gather}
where $\{\theta_1,\dots,\theta_N\}$ are inhomogeneity parameters. The monodromy matrix contains the $\mathfrak{gl}_2$ Lie algebra generators $\{S^+,S^-,S^z,I\}$ realized as $S^\alpha=\sum\limits_{i=1}^NS^\alpha_i$,
\begin{gather*}%\label{MonoT1}
T(u)= \left(\frac uc\right)^N+\left(\left(\begin{matrix} \dfrac{I+S^z}{2}& S^-\\ S^+&\dfrac{I-S^z}{2}\end{matrix}\right)
-\sum_{i=1}^N\theta_i \right)\left(\frac uc\right)^{N-1}+\cdots.
\end{gather*}
Using the RTT relation we can extract the relation between the $\mathfrak{gl}_2$ Lie algebra generators and the $t_{ij}(u)$, in particular we have:
\begin{gather*}
[S^z,t_{ii}(u)]=0, \qquad [S^z,t_{12}(u)]=t_{12}(u), \qquad [S^z,t_{21}(u)]=-t_{21}(u).
\end{gather*}
We say that the transfer matrix has a ${\rm U}(1)$ symmetry if $[S^z,t(u)]=0$, that is, in the case of a~diagonal twist $K$. In fact, if $\Det(K)\neq 0$, then due to the global $\mathfrak{gl}_2$ invariance we can always restore the ${\rm U}(1)$ symmetry, by diagonalizing the twist matrix and doing a proper similarity transformation~\cite{RMG03}, to perform usual ABA\footnote{In the case of the XXZ models one should consider a more sophisticated face-vertex transformation of the twist (see, e.g.,~\cite{BP152, ZLCYSW15b} and references therein).}. Here, as we want to study the MABA, we do not use this possibility.

Starting form the highest or the lowest weight vector we can construct two possible basis, that we call Bethe vectors, to span the Hilbert space of the model $\bigotimes_{i=1}^N\CC^{2s_i+1}$. Let $\bar u$ denote a~set of arbitrary complex parameters $\{u_1,\dots ,u_M\}$ with cardinality $\#\bar u=M$, and let $S=\sum\limits_{i=1}^N 2 s_i $. The representation space of the Yangian can be spanned by
\begin{gather}\label{diagBV}
\bbb^M(\bar u)=\prod_{i=1}^Mt_{12}(u_i)\rvec
\end{gather}
with $\# \bar u=M$ and $M=0,1,\dots ,S$. Or equivalently, by
\begin{gather}\label{diagBVh}
 \hat\bbb^{\hat M}(\bar v)=\prod_{i=1}^{\hat M}t_{21}(v_i)\rvech
\end{gather}
with $\# \bar v=\hat M$ and $\hat M=0,1,\dots ,S $.

These two basis are related by a morphism of the Yangian. Let us consider the mappings
\begin{gather*}%\label{anti-mor}
\psi_1(T(u))= T(-u), \qquad \psi_2(T(u))= T^t(u),
\end{gather*}
where $t$ is the usual transposition in the auxiliary space, $A_{ij}^t =A_{ji}$. They define anti-morphism of the RTT relation (see, e.g.,~\cite{Mol02}). Taking the composition of these mappings we obtain an automorphism of the Yangian
\begin{gather}\label{auto-mor}
\phi(T(u))=\psi_1 \circ \psi_2(T(u))= T^t(-u).
\end{gather}
This mapping relates the two different bases of the representation space of the Yangian. Indeed we have
\begin{gather*}%\label{mor1}
\phi\big(\bbb^M(\bar u)\big)=\hat \bbb^M(-\bar u),
\end{gather*}
where we have to apply the rule $\phi(\lambda_i(u))=\lambda_{3-i}(-u)$ and $\phi(\rvec)=\rvech$ to preserve the structure of the action. Thus, $\phi$ relates the highest weight and the lowest weight representations. We remark also that $\phi(t(u))=t(-u)$, and therefore the action of the transfer matrix on one of the basis defines the action on the other.	

\section{Algebraic Bethe ansatz and quantum Wronskian equation}\label{S:ABA}

Let us recall the algebraic Bethe ansatz \cite{BPRS12b, FST79,FT84} for generalized $\mathfrak{gl}_2$ invariant integrable models with ${\rm U}(1)$ symmetry. For this we introduce new notation. Let us define rational functions
\begin{gather*}
g(u,v)=\frac{c}{u-v},\qquad f(u,v)=1+g(u,v)=\frac{u-v+c}{u-v},
\end{gather*}
and a shorthand notation for the products of the $f$ functions and the $g$ functions
\begin{gather} \label{SH-not}
g(z,\bar u)=\prod_{i=1}^{M}g(z, u_i), \qquad f(\bar u,z)=\prod_{i=1}^{M}f(u_i,z), \qquad f(u_i,\bar u_i)=\prod_{j=1, j\neq i}^{M}f(u_i, u_j),
\end{gather}
where $\bar u_i=\{\bar u\setminus u_i\}$ is the set complementary to the element $u_i$.

The ABA allows to solve the spectral problem for the transfer matrix
\begin{gather*}%\label{transfer}
t_d(z)=\Tr_a\big(K_a^{(d)}T_a(z)\big)=\kb t_{11}(z)+ \kappa t_{22}(z),
\end{gather*}
with the diagonal twist matrix
\begin{gather}\label{diagtwist}
 K^{(d)}=\left(\begin{matrix} \kb & 0\\0&\kappa \end{matrix}\right).
\end{gather}
In this case we have the ${\rm U}(1)$ symmetry regarding to the operator $S^z$, i.e., $[S^z,t_d(z)]=0$ and simple eigenstates are given by the highest and lowest weight vectors $\{\rvec, \rvech\}$
\begin{gather}\label{act-td-WV}
t_d(z)\rvec=( \kb\lambda_1(z)+ \kappa \lambda_2(z))\rvec, \qquad t_d(z)\rvech=( \kappa \lambda_1(z)+ \kb \lambda_2(z))\rvech.
\end{gather}
We remark that for $K^{(d)}\propto I $, i.e., $\kappa=\kb$, that corresponds to the periodic boundary condition, both eigenvalues are the same and the model has additional symmetries (full $\mathfrak{gl}_2$ invariance).
We should recall that the method applies for any highest weight representation~\eqref{HWRG} without specifying the functions~$\lambda_i(u)$. In particular it also applies for an infinite dimensional representation as in the case of the Bose gas model.

We have seen in the previous section that the space of states can be spanned by Bethe vectors~\eqref{diagBV} for a given set of generic variables $\bar u=\{u_1,\dots,u_M\}$ of the cardinality $\# \bar u=M$. Due to the ${\rm U}(1)$ symmetry the spectrum is decomposed regarding to subspaces of the total spin~$S^z$. The Bethe vectors~\eqref{diagBV} form a basis for the total spin
 \begin{gather*}
 S^z\bbb^M(\bu)=\left( \frac{S}{2}-M\right)\bbb^M(\bu).
 \end{gather*}
Then $M$ is a given quantum number and we can diagonalize the transfer matrix for each invariant subspace labeled by $M$.

\begin{thm}[\cite{FST79}] The action of the transfer matrix onto the Bethe vector \eqref{diagBV} is given by
\begin{gather*}
t_d(z)\bbb^M(\bu)=\Lambda_d^M(z,\bar u)\bbb^M(\bu)+\sum_{i=1}^Mg(u_i,z)E_d^M(u_i,\bar u_i)t_{12}(z)\bbb^{M-1}(\bu_i),
\end{gather*}
where
\begin{gather*}
\Lambda_d^M(z,\bar u)=\kb \lambda_1(z)f(\bar u,z)+\kappa \lambda_2(z)f(z,\bar u), \\
E_d^M(u_i,\bar u_i)=-\kb \lambda_1(u_i)f(\bar u_i,u_i)+\kappa \lambda_2(u_i)f(u_i,\bar u_i).
\end{gather*}
If the set $\bu$ satisfies the Bethe equations
\begin{gather*}
\frac{\kb \lambda_1(u_i)}{ \kappa \lambda_2(u_i)} =\frac{f(u_i,\bar u_i)}{f(\bar u_i,u_i)}=-\prod_{j=1}^M\frac{u_i-u_j+c}{u_i-u_j-c},
\end{gather*}
for $i=0,1,\dots,M$, then the Bethe vectors \eqref{diagBV} are eigenvectors of the transfer matrix and are called on-shell Bethe vectors.
\end{thm}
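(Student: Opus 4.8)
The goal is to compute the action of $t_d(z) = \kb\,t_{11}(z) + \kappa\,t_{22}(z)$ on the Bethe vector $\bbb^M(\bu) = \prod_{i=1}^M t_{12}(u_i)\rvec$. The strategy is the standard one: commute the diagonal operators $t_{11}(z)$ and $t_{22}(z)$ past the string of creation operators $t_{12}(u_i)$ until they reach the highest-weight vector, where they act by the eigenvalues $\lambda_1(z)$ and $\lambda_2(z)$ from \eqref{HWRG}. The required commutation relations between $t_{11}(z)$ (resp.\ $t_{22}(z)$) and $t_{12}(u)$ are encoded in the RTT relation \eqref{RTT}; these are collected in Appendix~\ref{ComY} referenced in the text. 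Schematically, each such relation has the form $t_{aa}(z)\,t_{12}(u) = f(\ldots)\,t_{12}(u)\,t_{aa}(z) + g(\ldots)\,t_{12}(z)\,t_{aa}(u)$, i.e.\ a ``wanted'' term that keeps the structure intact plus an ``unwanted'' term in which the spectral parameter $z$ gets swapped onto the creation operator.

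First I would pull $t_{11}(z)$ through the product $\prod_i t_{12}(u_i)$. Commuting it all the way to the right without ever producing an unwanted term yields the coefficient $\kb\,\lambda_1(z)\,f(\bar u,z)$, which is the first half of $\Lambda_d^M$. Each time an unwanted term is generated instead, the parameters $z$ and $u_i$ are exchanged: this contributes to the off-diagonal (inhomogeneous) sum. By symmetry I would do the same for $t_{22}(z)$, producing $\kappa\,\lambda_2(z)\,f(z,\bar u)$ as the complementary half of $\Lambda_d^M$, together with its own unwanted contributions. The key bookkeeping point is that, collecting all unwanted terms, each is proportional to $t_{12}(z)$ acting on a Bethe vector $\bbb^{M-1}(\bu_i)$ with one rapidity removed, weighted by $g(u_i,z)$ and by the residue coefficient $E_d^M(u_i,\bar u_i) = -\kb\,\lambda_1(u_i)\,f(\bar u_i,u_i) + \kappa\,\lambda_2(u_i)\,f(u_i,\bar u_i)$. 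The relative minus sign between the two terms in $E_d^M$ arises because $t_{11}$ and $t_{22}$ contribute unwanted terms with opposite-sign structure constants.

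The main obstacle is the combinatorics of the unwanted terms: when $t_{aa}(z)$ first produces $t_{12}(z)\,t_{aa}(u_i)$, the operator $t_{aa}(u_i)$ must itself be commuted through the remaining $t_{12}(u_j)$ with $j \neq i$ before hitting $\rvec$, and one must verify that only the \emph{fully wanted} continuation survives in the residue coefficient while the doubly-unwanted terms cancel. The standard device is to argue by the symmetry of $\bbb^M(\bu)$ under permutations of $\bar u$ (which follows from $[t_{12}(u),t_{12}(v)]=0$): it suffices to extract the coefficient of the term where $z$ is swapped with a \emph{single} distinguished $u_i$, and symmetry then fixes the whole sum. I would therefore isolate the coefficient of $t_{12}(z)\prod_{j\neq i} t_{12}(u_j)\rvec$, check it equals $g(u_i,z)E_d^M(u_i,\bar u_i)$, and appeal to permutation symmetry to rule out the higher ``multiply-unwanted'' contributions. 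Finally, the Bethe equations are exactly the condition $E_d^M(u_i,\bar u_i)=0$ for every $i$, which kills the entire inhomogeneous sum and leaves $\bbb^M(\bu)$ an eigenvector with eigenvalue $\Lambda_d^M(z,\bar u)$; rewriting $E_d^M(u_i,\bar u_i)=0$ as the stated ratio is then immediate.
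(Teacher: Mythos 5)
Your proposal is correct and follows essentially the same route as the paper: the paper's proof simply defers to the standard ABA considerations of \cite{FST79}, i.e., the wanted/unwanted term bookkeeping via the commutation relations of Appendix~\ref{ComY} and the highest-weight action~\eqref{HWRG}, which is exactly the computation you spell out (your coefficients, including the sign from $g(z,u_i)=-g(u_i,z)$ and the permutation-symmetry argument fixing the unwanted sum, all check out against the multiple-action formulas~\eqref{comsl22a}). The only cosmetic difference is that the paper absorbs the diagonal twist into a redefinition of the $\lambda_i$, whereas you carry $\kb$ and $\kappa$ explicitly through the computation, which is equivalent.
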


\begin{proof} Multiplication of the original monodromy matrix by the diagonal twist reduces to the redefinition of the functions $\lambda_i(u)$. Since within the framework of ABA these functions are not fixed, the proof then follows from the standard considerations of the ABA~\cite{FST79}. It is based on the commutation relations of the operators $t_{ij}(u)$ (see Appendix~\ref{ComY}) and the action on the highest weight vector~\eqref{HWRG}.
\end{proof}

Using the mapping $\phi$ \eqref{auto-mor} we can find the transfer matrix action on the Bethe vectors constructed from the lowest weight vector. This gives another parametrization for the spectrum of the transfer matrix.

\begin{cor} The action of the transfer matrix onto the Bethe vector $\hat \bbb^{\hat M}(\bv)$ is given by
\begin{gather*}
t_d(z)\hat \bbb^{\hat M}(\bv)=\hat\Lambda_d^M(z,\bar v)\hat \bbb^{\hat M}(\bv)+\sum_{i=1}^{\hat M}g(z,v_i)\hat E_d^{\hat M}(v_i,\bar v_i)t_{21}(z)\hat \bbb^{\hat M-1}(\bv_i),
\end{gather*}
where
\begin{gather*}
\hat\Lambda_d^{\hat M}(z,\bar v)=\kappa \lambda_1(z)f(\bar v,z)+\kb \lambda_2(z)f(z,\bar v), \\
\hat E_d^{\hat M}(v_i,\bar v_i)=-\kappa \lambda_1(v_i)f(\bar v_i,v_i)+\kb \lambda_2(v_i)f(v_i,\bar v_i).
\end{gather*}
If the set $\bv$ satisfies the second family of Bethe equations
\begin{gather*}
\frac{\kappa \lambda_1(v_i)}{ \kb \lambda_2(v_i)} =\frac{f(v_i,\bar v_i)}{f(\bar v_i,v_i)}=-\prod_{j=1}^{\hat M}\frac{v_i-v_j+c}{v_i-v_j-c}
\end{gather*}
 for $i=0,1,\dots , \hat M$, then the on-shell Bethe vectors \eqref{diagBVh} are eigenvectors of the transfer matrix.
\end{cor}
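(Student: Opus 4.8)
The plan is to obtain the corollary from the preceding theorem by transporting the latter through the automorphism $\phi$ of \eqref{auto-mor}, rather than redoing the algebraic Bethe ansatz from scratch. The key observation is that $\phi=\psi_1\circ\psi_2$, being a composition of two anti-morphisms of the RTT relation, is an automorphism, so $\rho\circ\phi$ (with $\rho$ the representation at hand) again satisfies \eqref{RTT} with the same $R$-matrix and is therefore a genuine representation. In this twisted representation the lowest weight vector $\rvech$ becomes a \emph{highest} weight vector: since $\phi(t_{21}(u))=t_{12}(-u)$ and $t_{12}(u)\rvech=0$ by \eqref{tbvac}, one gets $\phi(t_{21}(u))\rvech=0$, while $\phi(t_{ii}(u))=t_{ii}(-u)$ produces the shifted highest weights $\lambda_{3-i}(-u)$. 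Since the theorem holds for an arbitrary highest weight representation without fixing the $\lambda_i$, it applies verbatim to $\rho\circ\phi$ with reference state $\rvech$ and weight functions $\lambda_{3-i}(-u)$.

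First I would record the images of the relevant objects under $\phi$: from $\phi(T(u))=T^t(-u)$ one reads off $\phi(t_{ii}(z))=t_{ii}(-z)$, hence $\phi(t_d(z))=\kb t_{11}(-z)+\kappa t_{22}(-z)=t_d(-z)$; likewise $\phi(t_{12}(z))=t_{21}(-z)$ and, by \eqref{diagBV}, \eqref{diagBVh}, $\phi(\bbb^M(\bu))=\hat\bbb^M(-\bu)$. Applying these to the theorem and then relabelling $z\to -z$, $u_i\to -v_i$ turns the left-hand side into $t_d(z)\hat\bbb^{\hat M}(\bv)$ and the creation operator $t_{12}(z)$ into $t_{21}(z)$. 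The scalar coefficients are handled with the elementary identities $f(-a,-b)=f(b,a)$ and $g(-a,-b)=g(b,a)$, so that each $f$-product has its two arguments exchanged; combined with the rule $\phi(\lambda_i)=\lambda_{3-i}(-\cdot)$, this exchange is precisely what interchanges the roles of $\kb$ and $\kappa$ and of $\lambda_1,\lambda_2$, carrying $\Lambda_d^M\mapsto\hat\Lambda_d^{\hat M}$ and $E_d^M\mapsto\hat E_d^{\hat M}$.

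It then remains to extract the on-shell condition: the inhomogeneous sum is a linear combination of terms each carrying the factor $\hat E_d^{\hat M}(v_i,\bv_i)$, so it vanishes if and only if $\hat E_d^{\hat M}(v_i,\bv_i)=0$ for every $i$, and $-\kappa\lambda_1(v_i)f(\bv_i,v_i)+\kb\lambda_2(v_i)f(v_i,\bv_i)=0$ rearranges into the stated second family of Bethe equations (the passage to the product form $-\prod_{j}\tfrac{v_i-v_j+c}{v_i-v_j-c}$ being supplied by the $j=i$ factor of the $f$-ratio, which contributes the overall minus sign). The step I expect to be most delicate is the sign-and-variable bookkeeping under the double negation $z\to -z$, $u_i\to -v_i$: one must verify that the signs coming from $g$, from the exchanged $f$-arguments, and from the relabelled $\lambda$'s combine to reproduce $\hat\Lambda_d^{\hat M}$ and $\hat E_d^{\hat M}$ with exactly the signs displayed, and the cleanest safeguard is to cross-check the cases $M=0,1$ directly against \eqref{act-td-WV}. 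As an alternative that avoids $\phi$ altogether, one could rerun the algebraic Bethe ansatz computation underlying the theorem directly in the lowest weight basis, commuting $t_d(z)$ through $\prod_i t_{21}(v_i)$ and using $t_{12}(u)\rvech=0$ from \eqref{tbvac}; this is longer but renders every sign manifest.
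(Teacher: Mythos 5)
Your proposal is essentially identical to the paper's own argument: the paper obtains the corollary precisely by transporting the theorem through the automorphism $\phi$ of \eqref{auto-mor}, using $\phi(t_d(z))=t_d(-z)$, $\phi\big(\bbb^M(\bu)\big)=\hat\bbb^M(-\bu)$ and the rule $\phi(\lambda_i(u))=\lambda_{3-i}(-u)$, exactly as you describe. One remark on the sign bookkeeping you rightly flag as delicate: carried through strictly (or checked directly at $\hat M=1$, as you suggest), the transport produces the unwanted-term coefficient $g(v_i,z)\hat E_d^{\hat M}(v_i,\bv_i)$ rather than the printed $g(z,v_i)\hat E_d^{\hat M}(v_i,\bv_i)$ --- a harmless sign typo in the paper's displayed formula that does not affect the on-shell conclusion, since it is the vanishing of $\hat E_d^{\hat M}$ that yields the second family of Bethe equations.
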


We can rewrite the two results of the previous section in terms of Baxter Q-polynomials. Let us define the polynomials
\begin{gather*}
q_+^M(z)=\prod_{i=1}^M\frac{z-u_i}{c}=\big(g(z,\bar u)\big)^{-1}
\end{gather*}
and
\begin{gather*}
q_-^{\hat M}(z)=\prod_{i=1}^{\hat M}\frac{z-v_i}{c}=\big(g(z,\bar v)\big)^{-1}.
\end{gather*}
This allows us to rewrite the spectral problem in terms of Baxter T-Q equations
\begin{gather}\label{tqp}
\Lambda_d(z) q_+^{M}(z)=\kb \lambda_1(z)q_+^{M}(z-c)+\kappa \lambda_2(z)q_+^{M}(z+c),
\end{gather}
and
\begin{gather}\label{tqm}
\Lambda_d(z) q_-^{\hat M}(z)=\kappa \lambda_1(z)q_-^{\hat M}(z-c)+\kb \lambda_2(z)q_-^{\hat M}(z+c).
\end{gather}
The two characterizations of the spectrum by Baxter T-Q equation satisfy a quantum Wronskian condition.
\begin{thm} The two Baxter Q-polynomials $q_+^M(z)$ and $q_-^{\hat M}(z)$ are related by the quantum Wronskian equation~{\rm \cite{PS99}}
\begin{gather}\label{WEd}
W_d(z)-W_d(z+c)=0,
\end{gather}
with
\begin{gather*}
W_d(z)=\frac{\kb q_+^{\hat M}(z-c)q_-^{M}(z)-\kappa q_+^{\hat M}(z)q_-^{M}(z-c)}{\lambda(z)},
\end{gather*}
and
\begin{gather*}
\frac{\lambda_{1}(z)}{\lambda_{2}(z)}=\frac{\lambda(z+c)}{\lambda(z)}, \qquad
\lambda(z)=\prod_{i=1}^N\prod_{k=1}^{2s_i}\frac{z-\theta_i+c\big(s_i-k+\frac{1}{2}\big)}{c}.
\end{gather*}
This implies $\hat M+M=S=\sum\limits_{i=1}^N2s_i$ and $W_d(z)= (\kb-\kappa)$.
\end{thm}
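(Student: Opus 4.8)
The plan is to eliminate the transfer-matrix eigenvalue $\Lambda_d(z)$ between the two Baxter T-Q equations \eqref{tqp} and \eqref{tqm} and to recognise the resulting identity as the quantum Wronskian equation \eqref{WEd}. First I would multiply \eqref{tqp} by $q_-^{\hat M}(z)$ and \eqref{tqm} by $q_+^{M}(z)$ and subtract. The two left-hand sides cancel, and collecting the remaining terms according to their factors $\lambda_1(z)$ and $\lambda_2(z)$ yields
\begin{gather*}
\lambda_1(z)\big[\kb\, q_+^{M}(z-c)q_-^{\hat M}(z)-\kappa\, q_+^{M}(z)q_-^{\hat M}(z-c)\big]
=-\lambda_2(z)\big[\kappa\, q_+^{M}(z+c)q_-^{\hat M}(z)-\kb\, q_+^{M}(z)q_-^{\hat M}(z+c)\big].
\end{gather*}
The key observation is that the bracket on the left is precisely $\lambda(z)\,W_d(z)$, while performing the shift $z\to z+c$ in the definition of $W_d$ shows that the bracket on the right equals $-\lambda(z+c)\,W_d(z+c)$.

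Next I would invoke the defining relation $\lambda_1(z)/\lambda_2(z)=\lambda(z+c)/\lambda(z)$, that is $\lambda_1(z)\lambda(z)=\lambda_2(z)\lambda(z+c)$. Substituting the two rewritten brackets into the identity above turns it into $\lambda_1(z)\lambda(z)\,W_d(z)=\lambda_2(z)\lambda(z+c)\,W_d(z+c)$, and cancelling the common nonzero factor leaves exactly $W_d(z)-W_d(z+c)=0$, the Wronskian equation \eqref{WEd}.

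To extract the stated consequences I would argue as follows. By construction $W_d$ is a rational function, a ratio of a polynomial of degree at most $M+\hat M$ to the degree-$S$ polynomial $\lambda$. A rational function obeying $W_d(z)=W_d(z+c)$ can have no pole, since a pole at $z_0$ would force poles at every $z_0+nc$; hence $W_d$ is a polynomial, and a $c$-periodic polynomial is constant. Writing $W_d\equiv C$ gives the polynomial identity $\kb\, q_+^{M}(z-c)q_-^{\hat M}(z)-\kappa\, q_+^{M}(z)q_-^{\hat M}(z-c)=C\,\lambda(z)$, and comparing leading behaviour as $z\to\infty$, where $q_+^{M}(z\pm c)\sim(z/c)^{M}$, $q_-^{\hat M}(z\pm c)\sim(z/c)^{\hat M}$ and $\lambda(z)\sim(z/c)^{S}$, shows the numerator behaves as $(\kb-\kappa)(z/c)^{M+\hat M}$. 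The step needing the most care is precisely this one: for generic twist ($\kb\neq\kappa$) the numerator has degree exactly $M+\hat M$, so matching degrees forces $M+\hat M=S$ and matching leading coefficients gives $C=\kb-\kappa$ simultaneously; in the degenerate fully $\mathfrak{gl}_2$-invariant case $\kb=\kappa$ this leading coefficient vanishes and the argument would need refinement. The algebraic elimination of the earlier steps, by contrast, is routine once one spots the pairing of the two brackets with $W_d(z)$ and $W_d(z+c)$.
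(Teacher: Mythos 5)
Your proof is correct and takes essentially the same route as the paper's: multiply \eqref{tqp} by $q_-^{\hat M}(z)$ and \eqref{tqm} by $q_+^{M}(z)$, subtract, use $\lambda_1(z)/\lambda_2(z)=\lambda(z+c)/\lambda(z)$ to obtain \eqref{WEd}, conclude that $W_d$ is constant, and read off $\hat M+M=S$ and $W_d=\kb-\kappa$ from the $z\to\infty$ limit. You merely make explicit what the paper leaves implicit---the no-poles argument showing a $c$-periodic rational function is constant, the leading-coefficient matching, and the genericity caveat $\kb\neq\kappa$---and you correctly use the pairing $q_+^{M}$, $q_-^{\hat M}$, where the paper's displayed formula for $W_d(z)$ has the superscripts $M$ and $\hat M$ inadvertently swapped.
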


\begin{proof}
For a given eigenvalue $\Lambda_d(z)$ of the transfer matrix, the quantum Wronskian equation follows from the difference between \eqref{tqp} times $q_-^{\hat M}(z)$ and \eqref{tqm} times $q_+^{M}(z)$, and from the fact that $\frac{\lambda_{1}(z)}{\lambda_{2}(z)}=\frac{\lambda(z+c)}{\lambda(z)}$. Then, due to \eqref{WEd} we conclude that $W_d(z)$ is a constant. Taking the limit $z\to \infty$ we find that $\hat M+M=S$ and $W_d(z)= (\kb-\kappa)$.
\end{proof}

\section[Modified algebraic Bethe ansatz and quantum Wronskian equation]{Modified algebraic Bethe ansatz\\ and quantum Wronskian equation}\label{S:MABA}

For a non-diagonal twist, the transfer matrix
\begin{gather}\label{ndtrans}
t(z)=\Tr_a\big(K_aT_a(z)\big)=\kb t_{11}(z)+\ka t_{22}(z)+\kp t_{21}(z)+\km t_{12}(z)
\end{gather}
with
\begin{gather*}%\label{gentwist}
K=\left(\begin{matrix} \tilde \kappa & \kappa^+\\ \kappa^-&\kappa \end{matrix}\right),
\end{gather*}
does not commute with the operator $S^z$. We have $[S^z,t(z)]= \kp t_{21}(z)-\km t_{12}(z)$. As a~consequence, the highest and lowest weight vectors are not anymore eigenvectors of the transfer matrix. The action of the twisted transfer matrix on the highest weight vector~\eqref{HWRG}, which is not anymore eigenstate unless $\km=0$, is given by
\begin{gather*}
t(z)\rvec=(\kb \lambda_1(z)+\ka \lambda_2(z))\rvec+\km t_{12}(z)\rvec.
\end{gather*}
Similarly the lowest weight vector~\eqref{tbvac} is not an eigenstate either (unless $\kp=0$), and the action of the twisted transfer matrix on it is
\begin{gather*}
t(z)\rvech=(\kb \lambda_2(z)+\ka \lambda_1(z))\rvech+ \kp t_{21}(z)\rvech.
\end{gather*}

The MABA allows us to construct the modified Bethe vector keeping the highest or lowest weight vectors as a starting point. The idea of the modified algebraic Bethe ansatz \cite{ABGP15,Bel14,BP152,BC13,BP15} relies in the construction of modified operators that preserve the operator algebra structure\footnote{For XXZ spin chain the underling operator algebra structure is not preserved but mapped in a more complex one (see \cite{ABGP15, Bel14,BP152} and references therein).}. The transformation of the monodromy matrix
\begin{gather*}%\label{MonoV}
\bar T(z)=AT(z)B=\left(\begin{matrix} \nu_{11}(z)& \nu_{12}(z)\\ \nu_{21}(z)&\nu_{22}(z)\end{matrix}\right),
\end{gather*}
where $A$ and $B$ are two arbitrary two by two matrices, is an automorphism of the Yangian of $\mathfrak{gl}_2$, i.e., new operators satisfy the same commutation relations as the operators~$t_{ij}(z)$ (see Appendix~\ref{ComY}). We remark that the modified operators can be seen as a transfer matrix for a~model with a twist that has null determinant:
\begin{gather*}
\nu_{ij}(z)=\Tr_a\big((V_{ji})_aT_a(z))
\end{gather*}
with
\begin{gather*}
V_{ji}=BE_{ji}A,
\end{gather*}
that satisfies $\Det(V_{ji})=0$. We call such operators {\it null twisted transfer matrices}. The Yangian generators~$t_{ij}(u)$, the modified operators~\cite{BC13,BP15}, as well as `good' operators~\cite{Bgood} are such objects.

We look for a deformation of the action of the transfer matrix on the highest weight vector in terms of the modified creation operator $\nu_{12}(u)$ of the following form:
\begin{gather}\label{newacth}
t(z)\rvec=\big((\kb-\rho_1) \lambda_1(z)+(\ka-\rho_2) \lambda_2(z)\big)\rvec+ \eta \nu_{12}(z)\rvec.
\end{gather}
Here $\rho_i$ and $\eta$ are some scalars that can be seen as deformation parameters that go to zero when we restore the ${\rm U}(1)$ symmetry and reduce~\eqref{newacth} to~\eqref{act-td-WV}. The matrix~$V_{21}$ is uniquely determined in terms of $\{\ka,\kb,\km,\kp,\rho_1,\rho_2,\eta\}$.

We do the same for the action on the lowest weight vector and look for the modified creation operator $\nu_{21}(z)$ such that
\begin{gather*}
t(z)\rvech=\big((\kb-\rho_1) \lambda_2(z)+(\ka-\rho_2) \lambda_1(z)\big)\rvech+ \hat \eta \nu_{21}(z)\rvech.
\end{gather*}
The matrix $V_{12}$ is uniquely determined in terms of $\{\ka,\kb,\km,\kp,\rho_1,\rho_2,\hat \eta\}$. We fix also the factorisation of the twist
\begin{gather}\label{Kdecom}
K=BDA,
\end{gather}
with
\begin{gather*}
D=\left(\begin{matrix} \kb-\rho_1&0\\ 0&\ka-\rho_2\end{matrix}\right).
\end{gather*}
Thus, we have the diagonal modified transfer matrix
\begin{gather*}
t(z)=\Tr(D\bar T(z))=(\kb-\rho_1)\nu_{11}(z)+(\ka-\rho_2)\nu_{22}(z).
\end{gather*}
Then, it is easy to see that $A$ and $B$ are given by\footnote{We have some freedom that follows from the transformation $A\to C^{-1}A$ and $B\to BC$ with $[C,D]=0$ that leaves \eqref{Kdecom} invariant.}
\begin{gather*}
A=\sqrt\mu\left(\begin{matrix} 1&\dfrac{\rho_2}{\km}\\ \dfrac{\rho_1}{\kp}&1\end{matrix}\right),\qquad B=\sqrt\mu\left(\begin{matrix} 1&\dfrac{\rho_1}{\km}\\ \dfrac{\rho_2}{\kp}&1\end{matrix}\right),\qquad \mu=\frac{1}{1-\frac{\rho_1\rho_2}{\kp\km}},
\end{gather*}
provided $\rho_1$ and $\rho_2$ enjoy the following relation:
\begin{gather*}
 \kp\km-(\rho_1 \kappa+\rho_2 \kb)+\rho_1\rho_2=0.
\end{gather*}

We can now express the modified operators as linear combination of the original opera\-tors~$t_{ij}(z)$
\begin{gather}\label{nutot11}
\nu_{11}(z)=\mu \left(t_{11}(z)+\frac{\rho_2}{\kp}t_{12}(z)+\frac{\rho_2}{\km}t_{21}(z)+\frac{\rho_2^2}{\km\kp}t_{22}(z)\right),\\
\label{nutot22}
\nu_{22}(z)=\mu \left(t_{22}(z)+\frac{\rho_1}{\kp}t_{12}(z)+\frac{\rho_1}{\km}t_{21}(z)+\frac{\rho_1^2}{\km\kp}t_{11}(z)\right),\\
\label{nutot12}
\nu_{12}(z)=\mu \left(t_{12}(z)+\frac{\rho_1}{\km}t_{11}(z)+\frac{\rho_2}{\km}t_{22}(z)+\frac{\rho_1\rho_2}{(\km)^2}t_{21}(z)\right),\\
\label{nutot21}
\nu_{21}(z)=\mu \left(t_{21}(z)+\frac{\rho_1}{\kp}t_{11}(z)+\frac{\rho_2}{\kp}t_{22}(z)+\frac{\rho_1\rho_2}{(\kp)^2}t_{12}(z)\right),
\end{gather}
with
\begin{gather*}
\mu=\frac{1}{1-\frac{\rho_1\rho_2}{\km\kp}}.
\end{gather*}

\begin{rmk}
In the case $\rho_1=\rho_2=\rho$ we recover the decomposition of the twist matrix given in \cite{BP15} with $A=B=L$.
\end{rmk}

\begin{rmk} With the constraint $\kp=\km=0$ we recover the diagonal twist \eqref{diagtwist} and the action formulas~\eqref{Off-t-M-vac}--\eqref{Off-t-E1} take the usual ABA form. From the point of view of the decomposition of the twist matrix $K= BDA$ we can recover the diagonal twist in two ways. On the one hand, taking $\rho_i=0$ and then $\kp=\km=0$ we send the modified operators $\nu_{ij}(z)$ to the initial operators $t_{ij}(z)$. On the other hand, taking first $ \kp=\km$, redefining $\rho_i=\kp \bar \rho_i$ with $\bar \rho_i\neq 0$ and then specifying $\kp=0$ we keep the modified operators $\nu_{ij}(z)$ provided $ \bar \rho_1 \kappa+\bar \rho_2 \kb=0$. The~$B^{\rm good}$ operator in~$\mathfrak{gl}_2$ case proposed in~\cite{Bgood} belongs to this special case of the modified operators.
\end{rmk}

Using \eqref{nutot11}, \eqref{nutot22} we can calculate the actions of the modified operators on the highest weight and the lowest weight vectors.

\begin{prop}If we consider $\nu_{12}(z)$ as a modified creation operator, then the actions of the modified operators $\{\nu_{11}(z),\nu_{22}(z),\nu_{21}(z)\}$ on the highest weight vector~\eqref{HWRG} are given by
\begin{gather}
\nu_{11}(z)\rvec= \lambda_1(z)\rvec+\frac{\rho_2}{\kp}\nu_{12}(z)\rvec,\nonumber\\
\nu_{22}(z)\rvec= \lambda_2(z)\rvec+\frac{\rho_1}{\kp}\nu_{12}(z)\rvec,\label{nvac12}\\
\nu_{21}(z)\rvec=\left(\frac{\rho_1}{\kp}\lambda_1(z)+\frac{\rho_2}{\kp}\lambda_2(z)\right)\rvec
+\frac{\rho_1\rho_2}{(\kp)^2}\nu_{12}(z)\rvec.\nonumber
\end{gather}
On the other hand, if we choose $\nu_{21}(z)$ as a modified creation operator, then the actions of the modified operators $\{\nu_{11}(z),\nu_{22}(z),\nu_{12}(z)\}$ on the lowest weight vector \eqref{tbvac} are given by
\begin{gather*}%\label{nvach21}
\nu_{11}(z)\rvech= \lambda_2(z)\rvech+\frac{\rho_2}{\km}\nu_{21}(z)\rvech,\\
\nu_{22}(z)\rvech= \lambda_1(z)\rvech+\frac{\rho_1}{\km}\nu_{21}(z)\rvech,\\
\nu_{12}(z)\rvech=\left( \frac{\rho_2}{\km}\lambda_1(z)+\frac{\rho_1}{\km}\lambda_2(z)\right)\rvech+\frac{\rho_1\rho_2}{(\km)^2}\nu_{21}(z)\rvech.
\end{gather*}
\end{prop}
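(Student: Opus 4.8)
The plan is to prove the statement by direct substitution, since the modified operators are given explicitly in \eqref{nutot11}--\eqref{nutot21} as linear combinations of the Yangian generators $t_{ij}(z)$, whose action on the highest weight vector is completely fixed by \eqref{HWRG}. First I would evaluate each of $\nu_{11}(z)$, $\nu_{22}(z)$, $\nu_{21}(z)$ on $\rvec$ by replacing $t_{11}(z)\rvec\to\lambda_1(z)\rvec$, $t_{22}(z)\rvec\to\lambda_2(z)\rvec$ and $t_{21}(z)\rvec\to 0$. After this step every action is a scalar multiple of $\rvec$ plus a single surviving term proportional to $\mu\,t_{12}(z)\rvec$, the only generator that neither annihilates $\rvec$ nor acts diagonally on it.

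The key observation is that this same object $\mu\,t_{12}(z)\rvec$ can be extracted from \eqref{nutot12} evaluated on $\rvec$, which rearranges to $\mu\,t_{12}(z)\rvec=\nu_{12}(z)\rvec-\mu\big(\tfrac{\rho_1}{\km}\lambda_1(z)+\tfrac{\rho_2}{\km}\lambda_2(z)\big)\rvec$. Substituting this identity back into the three expressions from the first step re-expresses each action purely in terms of $\rvec$ and the modified creation operator $\nu_{12}(z)\rvec$, which is exactly the form required by the proposition.

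The cancellations that yield the clean coefficients in the statement are all driven by the single identity $\mu^{-1}=1-\tfrac{\rho_1\rho_2}{\km\kp}$. For instance, in $\nu_{11}(z)\rvec$ the coefficient of $\lambda_1(z)\rvec$ collapses to $\mu\big(1-\tfrac{\rho_1\rho_2}{\kp\km}\big)=1$, while the two contributions proportional to $\lambda_2(z)\rvec$ cancel identically, leaving the coefficient $\tfrac{\rho_2}{\kp}$ in front of $\nu_{12}(z)\rvec$; the same mechanism fixes $\nu_{22}(z)\rvec$ and $\nu_{21}(z)\rvec$. I expect no conceptual obstacle here. The only genuine work is organizing the bookkeeping so that these cancellations are transparent, and the factor $\mu$ makes them automatic; the constraint $\kp\km-(\rho_1\kappa+\rho_2\kb)+\rho_1\rho_2=0$ is not even needed at this stage.

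For the lowest weight formulas I would not repeat the computation. One checks that the definitions \eqref{nutot11}--\eqref{nutot21} together with the weight conditions \eqref{HWRG} and \eqref{tbvac} are invariant under the involution that simultaneously exchanges the auxiliary indices $1\leftrightarrow 2$ in every generator, $\kp\leftrightarrow\km$, $\rho_1\leftrightarrow\rho_2$ and $\rvec\leftrightarrow\rvech$, while leaving $\lambda_1$, $\lambda_2$ and $\mu$ untouched; under this involution $\nu_{11}\leftrightarrow\nu_{22}$ and $\nu_{12}\leftrightarrow\nu_{21}$, and the diagonal conditions $t_{ii}(z)\rvec=\lambda_i(z)\rvec$ map to $t_{ii}(z)\rvech=\lambda_{3-i}(z)\rvech$ as in \eqref{tbvac}. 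Since the entire highest weight derivation is built only from these invariant data, applying the involution to the three highest weight identities produces the three lowest weight identities with precisely the stated coefficients, so the second half of the proposition follows without further calculation. This formal symmetry is the elementary counterpart of the Yangian duality $\rvec\leftrightarrow\rvech$ recorded in \eqref{auto-mor}.
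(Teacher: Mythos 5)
Your proposal is correct and takes essentially the same route as the paper, whose proof consists precisely of substituting the explicit expressions \eqref{nutot11}--\eqref{nutot21} into the weight-vector actions \eqref{HWRG} and \eqref{tbvac}; your cancellation bookkeeping via $\mu^{-1}=1-\tfrac{\rho_1\rho_2}{\kp\km}$ (and the observation that the constraint on $\rho_1$, $\rho_2$ is not needed) checks out. Your relabeling involution for the lowest-weight half is a harmless organizational shortcut rather than a different method --- it is valid here because the derivation uses only linearity of the $\nu_{ij}$ in the $t_{kl}$ and never the commutation relations, so the paper's proof simply repeats the same substitution with \eqref{tbvac} where you invoke the symmetry.
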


\begin{proof} The proposition follows from the explicit form of the modified operators in terms of the original operators $t_{kl}(z)$ given by \eqref{nutot11}--\eqref{nutot21} and the action on the weight vec\-tors~\eqref{HWRG},~\eqref{tbvac}.
\end{proof}

\begin{rmk}
For a given weight vector $\rvec$ or $\rvech$ the role of $\nu_{12}(z)$ and $\nu_{21}(z)$ can be inverted. Indeed, as example, from the last equation of \eqref{nvac12} we can express $\nu_{12}(z)\rvec$ in term of $\nu_{21}(z)\rvec$ and $\rvec$ and then consider $\nu_{21}(z)$ as a creation operator on $\rvec$.
\end{rmk}

It follows from the action of the modified operators that the actions of the transfer matrix on the weight vectors are given by
\begin{gather*}
t(z)\rvec=\big(( \kb -\rho_1)\lambda_1(z)+( \kappa -\rho_2)\lambda_2(z)\big)\rvec+\frac{\km}{\mu}\nu_{12}(z)\rvec,
\end{gather*}
and
\begin{gather*}
t(z)\rvech=\big(( \kb -\rho_1)\lambda_2(z)+( \kappa -\rho_2)\lambda_1(z)\big)\rvech+\frac{\kp}{\mu}\nu_{21}(z)\rvech.
\end{gather*}

Let us extend the shorthand notation \eqref{SH-not} to the product of commuting operators, for example,
\begin{gather*}%\label{MBV}
\nu_{12}(\bar u)=\prod_{i=1}^M\nu_{12}(u_i), \qquad \nu_{21}(\bar v)=\prod_{i=1}^{\hat M}\nu_{21}(v_i).
\end{gather*}
Then the action of $t(z)$ on the modified Bethe vector $\nu_{12}(\bar u)\rvec$ with $\# \bar u=M$ has been derived in~\cite{BP15} and is given by
\begin{gather}
t(z)\nu_{12}(\bar u)\rvec = \frac{\km}{\mu}\nu_{12}(z)\nu_{12}(\bar u)\rvec+\Lambda_1^M(z,\bar u)\nu_{12}(\bar u)\rvec \nonumber\\
\hphantom{t(z)\nu_{12}(\bar u)\rvec =}{} +\sum_{i=1}^{M}g(u_i,z)E_1^M(u_i,\bar u_i)\nu_{12}(z)\nu_{12}(\bar u_i)\rvec,\label{Off-t-M-vac}
\end{gather}
where
\begin{gather}
\Lambda_1^M(z,\bar u)=( \kb -\rho_1)\lambda_1(z)f(\bar u,z)+( \kappa -\rho_2)\lambda_2(z)f(z,\bar u),\\
E_1^M(u_i,\bar u_i)=( \kappa -\rho_2)\lambda_2(u_i)f(u_i,\bar u_i)-( \kb -\rho_1)\lambda_1(u_i)f(\bar u_i,u_i).
\end{gather}
On the other hand, the action of $t(z)$ on the second family of Bethe vectors given by $\nu_{21}(\bar v)\rvech$
with $\# \bar v=\hat M$ reads
\begin{gather}
t(z)\nu_{21}(\bar v)\rvech=\frac{\kp}{\mu}\nu_{21}(z)\nu_{21}(\bar v)\rvech+\hat \Lambda_1^{\hat M}(z,\bar v)\nu_{12}(\bar v)\rvech\nonumber\\
\hphantom{t(z)\nu_{21}(\bar v)\rvech=}{} +\sum_{i=1}^{\hat M}g(v_i,z)\hat E_1^{\hat M}(v_i,\bar v_i)\nu_{21}(z)\nu_{21}(\bar v_i)\rvech,\label{Off-t-M-vach}
\end{gather}
where
\begin{gather}
\hat \Lambda_1^{\hat M}(z,\bar v)=( \kappa -\rho_2)\lambda_1(z)f(\bar v,z)+( \kb -\rho_1)\lambda_2(z)f(z,\bar v),\\
\hat E_1^{\hat M}(v_i,\bar v_i)=( \kb -\rho_1)\lambda_2(v_i)f(v_i,\bar v_i)-( \kappa -\rho_2)\lambda_1(v_i)f(\bar v_i,v_i). \label{Off-t-E1}
\end{gather}

For the finite dimensional representation given by the monodromy matrix~\eqref{Tmonorep}, the multiple product of a~null transfer matrix $\nu(z)=\Tr(V T(z))$ with $\Det(V)=0$ (such as the modified creation and annihilation operators) satisfies the following theorem.

\begin{thm} \label{Th1prodS} Let $\bu$ be a set of arbitrary parameters of cardinality $\# \bu=\sum\limits_{i=1}^N 2 s_i=S$. For a~finite dimensional representation of the monodromy matrix given by~\eqref{Tmonorep}, the following operator identity holds
\begin{gather} \label{Th1prodSeq}
\nu(z)\nu(\bu)=\Tr(V)\left(F(z)g(z,\bu)\nu(\bu)+\sum_{i=1}^S g(u_i,z)F(u_i)g(u_i,\bu_i)\nu(z)\nu(\bu_i)\right),
\end{gather}
where $\nu(\bu_i)= \prod\limits^{S}_{j=1, \,j\neq i}\nu(u_j)$ and
\begin{gather}\label{F-rep}
F(z)=\prod_{i=1}^N\prod_{k=0}^{2s_i} \frac{z-\theta_i+c \big(s_i-k+\frac{1}{2}\big)}{c}.
\end{gather}
The l.h.s.\ of \eqref{Th1prodSeq} involves the product of $S+1$ operators $\nu$ and each term of the r.h.s.\ involves the product of $S$ operators~$\nu$.
\end{thm}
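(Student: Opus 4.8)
\emph{Plan.} The plan is to use that $\Det V=0$ forces $\rank V\le 1$ and to reduce, by the global symmetry, to two canonical twists. Assume $V\neq 0$ (otherwise both sides vanish) and write $V=\zeta\omega^{t}$, so that $\nu(z)=\Tr(VT(z))=\omega^{t}T(z)\zeta$. The $\mathfrak{gl}_2$-invariance $[\cL_{ai}(u),K_a\cK_i]=0$ gives, for invertible $K$, the quantum-space similarity $\Tr(K^{-1}VK\,T(z))=(\prod_i\cK_i)\,\nu(z)\,(\prod_i\cK_i)^{-1}$, because $\prod_i\cK_i$ commutes with the auxiliary trace and with $V$. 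Since $\Tr(K^{-1}VK)=\Tr V$ and $F,g$ are scalars, conjugating $V$ merely conjugates the whole identity \eqref{Th1prodSeq} by $\prod_i\cK_i$; hence it suffices to prove it for $V$ in Jordan normal form. A rank-one $V$ is conjugate either to $\Tr(V)\,E_{11}$ when $\Tr V\neq 0$, giving $\nu(z)=\Tr(V)\,t_{11}(z)$, or to the nilpotent $E_{21}$ when $\Tr V=0$, giving $\nu(z)=t_{12}(z)$ (the case $E_{12}$, $\nu=t_{21}$, being identical).

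In the traceless case the right-hand side of \eqref{Th1prodSeq} vanishes and the claim reduces to $t_{12}(z)\,t_{12}(\bu)=0$ for $\#\bu=S$. This I would get from the weight grading: $[S^z,t_{12}(z)]=t_{12}(z)$, so each factor raises the $S^z$-eigenvalue by one, while on $\bigotimes_i\CC^{2s_i+1}$ the spectrum of $S^z$ is $\{S/2,S/2-1,\dots,-S/2\}$, only $S+1$ values. A product of $S+1$ raising operators thus pushes every weight space above the top one and annihilates the whole space.

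In the trace-nonzero case, dividing \eqref{Th1prodSeq} by $\Tr(V)^{S+1}$ reduces it to the operator identity
\begin{gather*}
t_{11}(z)\,t_{11}(\bu)=F(z)g(z,\bu)\,t_{11}(\bu)+\sum_{i=1}^{S}g(u_i,z)F(u_i)g(u_i,\bu_i)\,t_{11}(z)\,t_{11}(\bu_i),
\end{gather*}
where the $t_{11}(u_i)$ mutually commute, so $t_{11}(\bu)$ is well defined. I would treat both sides as rational functions of $z$ with operator coefficients. First, poles: the only candidates are the simple poles at $z=u_k$, and the residue of the $F(z)g(z,\bu)$-term is cancelled exactly by the $i=k$ term of the sum, using $\Res_{z=u_k}g(z,\bu)=c\,g(u_k,\bu_k)$ and $\Res_{z=u_k}g(u_k,z)=-c$. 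Hence the difference $\Phi(z)$ is an operator-valued \emph{polynomial}. Second, degree: since $t_{11}(z)\sim (z/c)^N$ and, using $F(z)=\lambda_1(z)\lambda(z)$ of degree $S+N$, also $F(z)g(z,\bu)\sim (z/c)^N$ with the \emph{same} leading operator coefficient $t_{11}(\bu)$, while the sum is $O(z^{N-1})$, the leading terms cancel and $\deg\Phi\le N-1$.

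It remains to pin down this degree-$(N-1)$ polynomial by exhibiting $N$ zeros. I would use the points $z_m=\theta_m-c(s_m+\tfrac12)$, $m=1,\dots,N$, which are simultaneously zeros of $F$ (the $k=0$ factor at site $m$) and of the quantum determinant, $\mathop{\rm qdet}T(z_m)=\lambda_1(z_m)\lambda_2(z_m-c)=0$, because $\mathop{\rm qdet}\cL_m(z_m-\theta_m)=0$. There $F(z_m)=0$ removes the first term, and $t_{11}(z_m)$ factors out, so $\Phi(z_m)=t_{11}(z_m)\big(t_{11}(\bu)-\sum_i g(u_i,z_m)F(u_i)g(u_i,\bu_i)t_{11}(\bu_i)\big)$; I must show this vanishes. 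The vanishing of $\mathop{\rm qdet}\cL_m$ means $\cL_m(z_m-\theta_m)$ loses rank and factors through the projector onto a subrepresentation of $\CC^2\otimes\CC^{2s_m+1}$ (for $s_m=\tfrac12$ the rank-one singlet projector $\tfrac12(I-P)$). \textbf{The main obstacle is exactly here}: to show that this local rank-drop, buried in the middle of the string $\cL_1\cdots\cL_N$, propagates through the product of $S+1$ commuting factors and forces $\Phi(z_m)=0$. The route I expect to work is induction on the number of sites through the Yangian coproduct $T(z)=\widetilde T(z)\,\cL_N(z-\theta_N)$: the base case $N=1$ has $\deg\Phi\le 0$ and needs only the single evaluation at $z_1$, a direct computation on $\CC^{2s_1+1}$; equivalently one can contract the $S+1$ symmetric auxiliary spaces and use fusion to collapse one factor. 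Gluing the single-site reduction across sites, rather than the pole and degree bookkeeping, is where the real work lies.
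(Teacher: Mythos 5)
Your reduction steps are sound as far as they go: the conjugation of $V$ to a canonical form via the global $\mathfrak{gl}_2$ invariance is legitimate (exp is surjective on ${\rm GL}_2(\CC)$, and $F$, $g$ are scalars so the identity is conjugation-covariant), your traceless case via the $S^z$ grading is complete and correct (it is in fact a cleaner derivation of the paper's remark that $t_{ij}(z)t_{ij}(\bu)=0$ for $i\neq j$), and the pole and degree bookkeeping for $\Phi(z)$ is right, leaving a polynomial of degree at most $N-1$. But the proof is incomplete at exactly the point you flag, and that point is not a technicality: it is the whole theorem. Writing $\Phi(z_m)=t_{11}(z_m)X_m$ with $X_m=t_{11}(\bu)-\sum_{i} g(u_i,z_m)F(u_i)g(u_i,\bu_i)t_{11}(\bu_i)$, the operator $X_m$ does \emph{not} vanish (already for $N=1$, $s=\tfrac12$ one computes $X_1=\operatorname{diag}(*,0)$ with $*\neq 0$ in the weight basis), so you genuinely need $\operatorname{Im}X_m\subseteq\ker t_{11}(z_m)$; the kernel is nontrivial since $\lambda_1(z_m)=0$ gives $t_{11}(z_m)\rvec=0$, but controlling its size and showing the image of $X_m$ lands inside it for every site $m$ is precisely your ``rank-drop propagation'', and neither the coproduct induction nor the fusion collapse is carried out. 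The base case $N=1$ does close (diagonalize in the weight basis, factor $\tau(z)$ out of $F(z)$, and run your partial-fractions argument per weight), but the gluing across sites is where a new idea is required, and none is supplied.

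For comparison, the paper obtains exactly this missing input spectrally (Appendix~\ref{simplenu}): using $\mathfrak{gl}_2$ invariance and the commutation relation \eqref{comsl22a} it constructs an SoV-type eigenbasis $|Y(\bar k)\>$ of the commuting family $\nu(z)$, concluding that for $\Tr(V)\neq 0$ and generic $\bar\theta$ the operator $\nu(z)$ has simple spectrum and is invertible, with $F(z)\nu^{-1}(z)$ an operator-valued polynomial of degree $S$; the identity \eqref{Th1prodSeq} then drops out of a one-line residue computation applied to $F(z)\nu^{-1}(z)\nu(\bu)g(z,\bu)$ with $\#\bu=S+1$, setting $u_{S+1}=z$, and the traceless case is reached by varying the free parameters $\rho_i$, $\mu$, $\km$. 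Note that the eigenvalues $F(z)/\Lambda_{12}(z,\bar k)$ encode exactly the rank drops of $\nu(z)$ at your special points, so the natural way to complete your route is to import that diagonalization: in the eigenbasis of the commuting family, your interpolation argument becomes scalar and closes just as in your $N=1$ case. As written, however, the proposal proves the easy half (traceless case, pole/degree analysis) and leaves the hard half as a stated obstacle, so it does not constitute a proof.
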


\begin{proof} It is proved in the Appendix~\ref{simplenu} that for $\Tr(V)\neq 0$ and for arbitrary inhomogeneity parameters $\bar\theta=\{\theta_1,\dots,\theta_N\}$, the operator $\nu(z)$ has simple spectrum. Moreover, its inverse multiplied by the function $F(z)$ has polynomial eigenvalues of degree $S=\sum\limits_{i=1}^N2 s_i$ with the leading term given by
\begin{gather*}
F(z)\nu^{-1}_{12}(z)=\Tr(V)^{-1}\left(\frac zc\right)^{S}+\cdots , \qquad z\to \infty.
\end{gather*}
Let $\#\bar u=S+1$. Consider a product of operators
\begin{gather*}
F(z)\nu^{-1}_{12}(z)\nu(\bar u)g(z,\bar u).
\end{gather*}
It has simples poles at the points $u_i$ and behaves as $z^{-1}$ at infinity. Then, taking the sum of all residues we find that
\begin{gather*}%\label{nu-sum}
\nu(\bar u)=\Tr(V)\sum_{i=1}^{S+1}F(u_i)g(u_i,\bar u_i)\nu(\bar u_i).
\end{gather*}
Setting $u_{S+1}=z$ we complete the proof.
\end{proof}

\begin{cor} \label{mulmodop} We can specify Theorem~{\rm \ref{Th1prodS}} by taking $V=BE_{21}A$ to find
\begin{gather*} %\label{Th1prodSeq1}
\frac{\kappa^-}{\mu}\nu_{12}(z)\nu_{12}(\bu)= (\rho_1+\rho_2)\left(\!F(z)g(z,\bu)\nu_{12}(\bu)+\sum_{i=1}^S g(u_i,z)F(u_i)g(u_i,\bu_i)\nu_{12}(z)\nu_{12}(\bu_i)\!\right),
\end{gather*}
and by taking $V=BE_{12}A$ to find
\begin{gather*} %\label{Th1prodSeq2}
\frac{\kappa^+}{\mu}\nu_{21}(z)\nu_{21}(\bv)= (\rho_1+\rho_2)\left(F(z)g(z,\bv)\nu_{21}(\bv)+\sum_{i=1}^S g(v_i,z)F(v_i)g(v_i,\bv_i)\nu_{21}(z)\nu_{21}(\bv_i)\right).
\end{gather*}
\end{cor}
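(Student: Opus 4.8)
The plan is to read off Corollary~\ref{mulmodop} as a direct specialization of Theorem~\ref{Th1prodS}, so that no new analytic work is needed. Recall that the modified creation operators are themselves null twisted transfer matrices, $\nu_{12}(z)=\Tr_a\big((V_{21})_aT_a(z)\big)$ and $\nu_{21}(z)=\Tr_a\big((V_{12})_aT_a(z)\big)$, with $V_{21}=BE_{21}A$ and $V_{12}=BE_{12}A$. Since $E_{21}$ and $E_{12}$ are rank one, each $V_{ji}=BE_{ji}A$ has rank one and hence $\Det(V_{ji})=0$, which is exactly the hypothesis of Theorem~\ref{Th1prodS}. Thus I may substitute $\nu(z)=\nu_{12}(z)$ (respectively $\nu(z)=\nu_{21}(z)$) directly into the operator identity~\eqref{Th1prodSeq}, and the only remaining task is to evaluate the scalar prefactor $\Tr(V)$ in each case.

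First I would compute $\Tr(V_{21})$. Using cyclicity of the trace, $\Tr(BE_{21}A)=\Tr\big(ABE_{21}\big)=(AB)_{12}$, so I only need the $(1,2)$ entry of the product $AB$ of the explicit matrices $A$ and $B$ displayed after~\eqref{Kdecom}. A one-line multiplication gives $(AB)_{12}=\mu\,(\rho_1+\rho_2)/\km$, whence $\Tr(V_{21})=\mu(\rho_1+\rho_2)/\km$. As a cross-check, reading the coefficients of $t_{11}$ and $t_{22}$ off the explicit expansion~\eqref{nutot12} reconstructs the diagonal of $V_{21}$, reproduces the same trace, and simultaneously confirms $\Det(V_{21})=0$. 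Inserting $\Tr(V_{21})=\mu(\rho_1+\rho_2)/\km$ into~\eqref{Th1prodSeq} and multiplying both sides by $\km/\mu$ cancels the factor $\mu/\km$ and leaves exactly the overall constant $\rho_1+\rho_2$, which is the first identity of the corollary.

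The second identity is obtained in the same way with $V_{12}=BE_{12}A$. Here $\Tr(BE_{12}A)=\Tr(ABE_{12})=(AB)_{21}=\mu(\rho_1+\rho_2)/\kp$, which again matches the diagonal of $V_{12}$ read from~\eqref{nutot21}. Multiplying~\eqref{Th1prodSeq} for $\nu(z)=\nu_{21}(z)$ by $\kp/\mu$ once more produces the global constant $\rho_1+\rho_2$, giving the stated formula.

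I do not expect any genuine obstacle: all the analytic content, namely the simplicity of the spectrum of $\nu(z)$, the degree-$S$ structure of $F(z)\nu^{-1}(z)$, and the residue argument, has already been discharged in Theorem~\ref{Th1prodS}. The only points requiring a little care are the bookkeeping of the prefactor, where one must check that the model-dependent trace $\Tr(V)$ combines with the externally supplied normalizations $\km/\mu$ and $\kp/\mu$ to give the same symmetric constant $\rho_1+\rho_2$ in both cases, and the tacit genericity assumption: Theorem~\ref{Th1prodS} requires $\Tr(V)\neq 0$, which here reads $\rho_1+\rho_2\neq 0$, so the corollary is understood in that generic regime.
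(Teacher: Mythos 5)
Your proposal is correct and takes essentially the same route the paper intends: the corollary is stated without separate proof precisely because it is the specialization $V=BE_{21}A$, $V=BE_{12}A$ of Theorem~\ref{Th1prodS}, and your trace computation $\Tr(BE_{21}A)=(AB)_{12}=\mu(\rho_1+\rho_2)/\km$, $\Tr(BE_{12}A)=(AB)_{21}=\mu(\rho_1+\rho_2)/\kp$ (consistent with the diagonal coefficients in \eqref{nutot12}--\eqref{nutot21}) supplies the only detail the paper leaves implicit. Your observations that $\Det(BE_{ji}A)=0$ follows from rank one and that the genericity $\rho_1+\rho_2\neq 0$ is tacitly required (since the theorem's proof assumes $\Tr(V)\neq 0$) are both accurate and match the paper's treatment.
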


These examples show the algebraic origin of the inhomogeneous term of the modified Baxter T-Q equation \cite{CYSW13a,CYSW13b,CYSW13c}. It was conjectured and then proved in \cite{ABGP15,Bel14,BP152,BP15,Cram14} for models on a~segment. Here we go beyond and prove this property independently of the action on the highest weight vector in the case the twisted XXX spin chain with arbitrary positive (half)-integer spins.

\begin{rmk}
For the fundamental representation, $s_i=\frac{1}{2}$, we have $F(z)=\lam_{1}(z)\lam_{2}(z)$. This proves the conjecture of \cite{BP15}.
\end{rmk}

\begin{rmk}The entries of the monodromy matrix can be treated as null transfer matrices $t_{ij}(z)=\Tr(E_{ji}T(z))$. Then we have from Theorem~\ref{Th1prodS}
\begin{gather*}
t_{ii}(z)t_{ii}(\bu)=F(z)g(z,\bu)t_{ii}(z)t_{ii}(\bu)+\sum_{i=1}^S F(u_i)g(u_i,z)g(u_i,\bu_i)t_{ii}(z)t_{ii}(\bu_i), \\
t_{ij}(z)t_{ij}(\bu)=0,
\end{gather*}
for $i\neq j$, $\# \bu=S$, and the function $F(z)$ given by~\eqref{F-rep}.
\end{rmk}

\begin{thm}For finite dimensional representation of the monodromy matrix given by~\eqref{Tmonorep}, the action of the transfer matrix on the Bethe vector $\nu_{12}(\bar u)\rvec$ with $ \#\bu=\sum\limits_{i=1}^N 2s_i=S$ is given by
\begin{gather*} %\label{Off-t-N-vac}
t(z)\nu_{12}(\bar u)\rvec=\Lambda(z,\bu)\nu_{12}(\bar u)\rvec +\sum_{i=1}^{S} g(u_i,z)E(u_i,\bu_i)\nu_{12}(z)\nu_{12}(\bar u_i)\rvec,
\end{gather*}
where
\begin{gather*}
\Lambda(z,\bu)=( \kb -\rho_1)\lambda_1(z)f(\bar u,z)+( \kappa -\rho_2)\lambda_2(z)f(z,\bar u)+(\rho_1+\rho_2)F(z)g(z,\bu),\\
E(u_i,\bu_i)=( \kappa -\rho_2)\lambda_2(u_i)f(u_i,\bar u_i)-( \kb -\rho_1)\lambda_1(u_i)f(\bar u_i,u_i)+(\rho_1+\rho_2)F(u_i)g(u_i,\bu_i).
\end{gather*}
The action of the transfer matrix on the Bethe vector $\nu_{21}(\bar v)\rvech$ with $\bv=\sum\limits_{i=1}^N 2s_i=S$ is given by
\begin{gather*}
t(z)\nu_{21}(\bar v)\rvech=\hat\Lambda(z,\bv)\nu_{21}(\bar v)\rvech +\sum_{i=1}^{S} g(v_i,z)\hat E(v_i,\bv_i)\nu_{21}(z)\nu_{21}(\bar v_i)\rvech,
 \end{gather*}
where
\begin{gather*}
\hat\Lambda(z,\bv)=( \ka -\rho_2)\lambda_1(z)f(\bar v,z)+( \kb -\rho_1)\lambda_2(z)f(z,\bar v)+(\rho_1+\rho_2)F(z)g(z,\bv),\\
\hat E(v_i,\bv_i)=( \kb -\rho_1)\lambda_2(v_i)f(v_i,\bar v_i)-( \ka-\rho_2)\lambda_1(v_i)f(\bar v_i,v_i)+(\rho_1+\rho_2)F(v_i)g(v_i,\bv_i).
\end{gather*}
Thus, when the inhomogeneous Bethe equations are satisfied, i.e., $E(u_i,\bu_i)=0$ and $\hat E(v_i,\bv_i)=0$ for $i=1,\dots,S$, the vectors $\nu_{12}(\bar u)\rvec$ and $\nu_{21}(\bar v)\rvech$ are eigenvectors of the transfer matrix.
\end{thm}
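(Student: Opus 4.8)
The plan is to combine two ingredients that are already at hand: the generic off-shell action of $t(z)$ on the modified Bethe vector, derived in \cite{BP15} and recorded in \eqref{Off-t-M-vac}, which holds for \emph{any} cardinality $\#\bu=M$; and the reduction identity of Corollary~\ref{mulmodop}, which is special to the finite-dimensional representation and applies precisely at $\#\bu=S$. The entire content of the statement is that, at the saturating value $M=S$, the ``unwanted'' leading term $\tfrac{\km}{\mu}\nu_{12}(z)\nu_{12}(\bu)\rvec$ of \eqref{Off-t-M-vac}---which involves $S+1$ creation operators---collapses back onto the span of products of $S$ operators, thereby upgrading the off-shell action into a genuine eigenvalue relation once the inhomogeneous Bethe equations hold.

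Concretely, first I would set $M=S$ in \eqref{Off-t-M-vac}, so that
\begin{gather*}
t(z)\nu_{12}(\bu)\rvec=\frac{\km}{\mu}\nu_{12}(z)\nu_{12}(\bu)\rvec+\Lambda_1^{S}(z,\bu)\nu_{12}(\bu)\rvec+\sum_{i=1}^{S}g(u_i,z)E_1^{S}(u_i,\bu_i)\nu_{12}(z)\nu_{12}(\bu_i)\rvec.
\end{gather*}
Then I would substitute the identity of Corollary~\ref{mulmodop} for the first term, replacing $\tfrac{\km}{\mu}\nu_{12}(z)\nu_{12}(\bu)$ by $(\rho_1+\rho_2)\big(F(z)g(z,\bu)\nu_{12}(\bu)+\sum_{i=1}^{S}g(u_i,z)F(u_i)g(u_i,\bu_i)\nu_{12}(z)\nu_{12}(\bu_i)\big)$. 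After this substitution every surviving term carries at most $S$ creation operators, and one only has to collect coefficients. The terms proportional to $\nu_{12}(\bu)\rvec$ combine to $\Lambda_1^{S}(z,\bu)+(\rho_1+\rho_2)F(z)g(z,\bu)=\Lambda(z,\bu)$, while for each $i$ the terms proportional to $g(u_i,z)\nu_{12}(z)\nu_{12}(\bu_i)\rvec$ combine to $E_1^{S}(u_i,\bu_i)+(\rho_1+\rho_2)F(u_i)g(u_i,\bu_i)=E(u_i,\bu_i)$, which is exactly the claimed action.

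For the second family I would proceed in the mirror way, setting $\#\bv=S$ in \eqref{Off-t-M-vach} and using the second identity of Corollary~\ref{mulmodop} (with $V=BE_{12}A$) to reabsorb the leading term $\tfrac{\kp}{\mu}\nu_{21}(z)\nu_{21}(\bv)\rvech$; the coefficient bookkeeping is identical, with the roles of $\kb-\rho_1$ and $\ka-\rho_2$ interchanged in the diagonal weights, producing $\hat\Lambda(z,\bv)$ and $\hat E(v_i,\bv_i)$. Equivalently, one can deduce the second formula from the first by applying the automorphism $\phi$ of \eqref{auto-mor}, which exchanges the two weight representations and carries the two families of Bethe vectors into one another. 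Finally, the eigenvector statement is immediate: when $E(u_i,\bu_i)=0$ (resp.\ $\hat E(v_i,\bv_i)=0$) for all $i$, the entire sum over $i$ vanishes and only the diagonal term survives, so $\nu_{12}(\bu)\rvec$ (resp.\ $\nu_{21}(\bv)\rvech$) is an eigenvector with eigenvalue $\Lambda(z,\bu)$ (resp.\ $\hat\Lambda(z,\bv)$).

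The computation is essentially mechanical, so there is no serious analytic obstacle; the one point demanding care is that the reduction of Corollary~\ref{mulmodop} is valid \emph{only} at the critical cardinality $\#\bu=S$, since it rests on Theorem~\ref{Th1prodS} and hence on the finite-dimensionality of the representation \eqref{Tmonorep}. For $M<S$ no such collapse occurs and the leading term genuinely raises the number of creation operators; it is precisely the saturation $M=S$ that makes the extra inhomogeneous piece $(\rho_1+\rho_2)F(z)g(z,\bu)$ appear in $\Lambda$ and closes the action. Verifying that the two contributions of the form $g(u_i,z)\nu_{12}(z)\nu_{12}(\bu_i)\rvec$---one from \eqref{Off-t-M-vac} and one from the substituted identity---share exactly the same operator structure, so that their scalar coefficients may simply be added, is the only place where one must be attentive to the index ranges.
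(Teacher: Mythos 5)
Your proposal is correct and follows exactly the paper's own route: the paper's proof is the one-line observation that the theorem ``follows from the actions \eqref{Off-t-M-vac}--\eqref{Off-t-M-vach} and Corollary~\ref{mulmodop}'', and you have simply spelled out that substitution---setting $\#\bu=S$, replacing the leading term $\tfrac{\km}{\mu}\nu_{12}(z)\nu_{12}(\bu)$ via the corollary, and collecting coefficients into $\Lambda(z,\bu)=\Lambda_1^{S}(z,\bu)+(\rho_1+\rho_2)F(z)g(z,\bu)$ and $E(u_i,\bu_i)=E_1^{S}(u_i,\bu_i)+(\rho_1+\rho_2)F(u_i)g(u_i,\bu_i)$, with the mirror computation for $\nu_{21}(\bar v)\rvech$. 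Your closing remark that the collapse is special to the saturating cardinality $\#\bu=S$ correctly identifies why finite-dimensionality of \eqref{Tmonorep} enters.
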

\begin{proof} The theorem follows from the actions \eqref{Off-t-M-vac}--\eqref{Off-t-M-vach} and Corollary~\ref{mulmodop}.
\end{proof}

Then we can rewrite the two eigenvalues by two inhomogeneous Baxter T-Q functional equation. We define the functional Q-operators $Q^+(z)=(g(z,\bu))^{-1}$ and $Q^-(z)=(g(z,\bv))^{-1}$, that are two polynomials in $z$ of degree~$S$, to find that
\begin{gather}\label{inTQp}
\Lambda(z)Q^+(z)=( \kb -\rho_1)\lambda_1(z)Q^+(z-c)+( \ka -\rho_2)\lambda_2(z)Q^+(z+c)+(\rho_1+\rho_2)F(z)
\end{gather}
and
\begin{gather}\label{inTQm}
\Lambda(z)Q^-(z)=( \ka -\rho_2)\lambda_1(z)Q^-(z-c)+( \kb -\rho_1)\lambda_2(z)Q^-(z+c)+(\rho_1+\rho_2)F(z).
\end{gather}
Then we can construct quantum Wronskian equations~\cite{PS99}.
\begin{thm} The two functional Baxter Q-operators $Q^\pm(z)$ are related by the modified quantum Wronskian equation given as
\begin{gather*}
W(z)-W(z+c)= (\rho_1+\rho_2)(Q^+(z)-Q^-(z)),\\
W(z)=\frac{(\kb-\rho_1)Q^+(z-c)Q^-(z)-(\kappa-\rho_2)Q^+(z)Q^-(z-c)}{\lambda(z)}.
\end{gather*}
\end{thm}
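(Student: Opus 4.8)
The plan is to reproduce the computation behind the diagonal Wronskian (Theorem on $W_d$), now carrying along the new inhomogeneous terms. First I would multiply the inhomogeneous T-Q equation \eqref{inTQp} by $Q^-(z)$ and the companion equation \eqref{inTQm} by $Q^+(z)$, and subtract. Since both equations share the \emph{same} eigenvalue $\Lambda(z)$, the left-hand sides become $\Lambda(z)Q^+(z)Q^-(z)$ in each case and cancel in the difference. What survives is a bilinear identity among the $Q^\pm$ evaluated at $z$ and $z\pm c$, with coefficients built from $\kb-\rho_1$, $\kappa-\rho_2$, the weights $\lambda_1,\lambda_2$, and the inhomogeneity. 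Crucially the two inhomogeneous contributions $(\rho_1+\rho_2)F(z)Q^-(z)$ and $(\rho_1+\rho_2)F(z)Q^+(z)$ combine into $(\rho_1+\rho_2)F(z)\bigl(Q^+(z)-Q^-(z)\bigr)$, which will furnish the right-hand side of the claimed equation.

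Next I would group the four remaining bilinear terms according to whether they carry $\lambda_1(z)$ or $\lambda_2(z)$. The $\lambda_1(z)$-terms assemble into $(\kb-\rho_1)Q^+(z-c)Q^-(z)-(\kappa-\rho_2)Q^+(z)Q^-(z-c)$, which is exactly $\lambda(z)W(z)$ by the definition of $W$. The $\lambda_2(z)$-terms assemble into $-\bigl[(\kb-\rho_1)Q^+(z)Q^-(z+c)-(\kappa-\rho_2)Q^+(z+c)Q^-(z)\bigr]$, which is $-\lambda(z+c)W(z+c)$ after shifting $z\to z+c$ in the definition of $W$. Hence the surviving left side equals $\lambda_1(z)\lambda(z)W(z)-\lambda_2(z)\lambda(z+c)W(z+c)$.

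The crux is the arithmetic identity $F(z)=\lambda_1(z)\lambda(z)=\lambda_2(z)\lambda(z+c)$, which I would check directly from the explicit products for $\lambda_1$, $\lambda_2$, $\lambda$ and $F$ \eqref{F-rep}: at each site $i$ the $k=0$ factor of $F$ is precisely the $\lambda_1$ factor while $k=1,\dots,2s_i$ reproduce $\lambda(z)$, and symmetrically the $k=2s_i$ factor is the $\lambda_2$ factor while $k=0,\dots,2s_i-1$ reproduce $\lambda(z+c)$. This is the sharpening of the relation $\lambda_1(z)/\lambda_2(z)=\lambda(z+c)/\lambda(z)$ already used in the diagonal case, now pinning the common product down to $F$. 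Inserting it, the left side becomes $F(z)\bigl(W(z)-W(z+c)\bigr)$ and the right side is $(\rho_1+\rho_2)F(z)\bigl(Q^+(z)-Q^-(z)\bigr)$; cancelling the nonzero factor $F(z)$ delivers the modified quantum Wronskian equation. The only genuine obstacle is recognizing and verifying the identity $F(z)=\lambda_1(z)\lambda(z)=\lambda_2(z)\lambda(z+c)$, since it is exactly what makes the two inhomogeneous contributions fit the $\lambda$-normalization built into $W$ and leaves no spurious prefactor in the final relation.
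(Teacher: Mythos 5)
Your proposal is correct and follows essentially the same route as the paper's proof: multiply \eqref{inTQp} by $Q^-(z)$ and \eqref{inTQm} by $Q^+(z)$, subtract so that the common eigenvalue $\Lambda(z)$ cancels, and use $\lambda_1(z)/\lambda_2(z)=\lambda(z+c)/\lambda(z)$ together with $\lambda_1(z)\lambda(z)=F(z)$ to identify the bilinear terms with $F(z)W(z)$ and $F(z)W(z+c)$. Your explicit verification of $F(z)=\lambda_1(z)\lambda(z)=\lambda_2(z)\lambda(z+c)$ from the product formulas is exactly the detail the paper leaves implicit, and it is checked correctly.
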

\begin{proof} For a given eigenvalue $\Lambda(z)$ of the transfer matrix, one should multiply~\eqref{inTQp} and~\eqref{inTQm} respectively by $Q^-(z)$ and $Q^+(z)$, and consider the difference of the resulting expressions. Then the use of $\frac{\lambda_1(z)}{\lambda_2(z)}=\frac{\lambda(z+c)}{\lambda(z)}$ and $\lambda_1(z)\lambda(z)=F(z)$ directly leads to the modified quantum Wronskian equation.
\end{proof}

\begin{rmk} For invertible twist, the transfer matrix~\eqref{ndtrans} can also be characterized from the usual Baxter T-Q equations \cite{RMG03}. Defining $\alpha^\pm$ to be eigenvalues of the twist matrix $K$ that satisfy $\alpha^++\alpha^-=\kb +\kappa$ and $\alpha^+\alpha^-=\Det K$. Then we have
\begin{gather*}%\label{inTQa}
\Lambda(z)q^M(z)=\alpha^+\lambda_1(z)q^M(z-c)+\alpha^-\lambda_2(z)q^M(z+c).
\end{gather*}
We can construct Wronskian type equation between this parametrization and the inhomogeneous one~\eqref{inTQp}
\begin{gather*}
W(z)-W(z+c)=\left( \frac{\alpha^-}{\kb-\rho_1} \right)^{z/c} (\rho_1+\rho_2)q(z),
\end{gather*}
with
\begin{gather*}
W(z)=\left( \frac{\alpha^-}{\kb-\rho_1} \right)^{z/c}\frac{\alpha^+ q(z-c)Q^+(z)-(\kb-\rho_1)q(z)Q^+(z-c)}{\lambda(z)}.
\end{gather*}
Here we used the identity $\alpha^+\alpha^-=(\kb-\rho_1)(\kappa-\rho_2)$.
\end{rmk}

\section{Conclusion}\label{Conc}

In this paper we presented several proofs of the new steps needed to perform the algebraic Bethe ansatz for the models without ${\rm U}(1)$ symmetry. We showed that the appearance of the new term in the Baxter T-Q equation follows from the analysis of the product of the modified creation operators and that it is a general property of the null twisted transfer matrix. It is not necessary to consider the action on a weight vector.

Then the Bethe ansatz characterization of the spectral problem of the XXX Heisenberg spin chain with an arbitrary twist and arbitrary positive (half)-integer spin at each site of the chain is fully understood by means of the MABA. We also derived, for arbitrary positive (half)-integer spins, a modified quantum Wronskian equation that relates two different characterizations of the spectral problem. It should be of interest to relate the modified quantum Wronskian equation with Hirota equation~\cite{FN16}. Moreover finding the numerical solutions of the Bethe equations remains a challenging open problem (see recent development in \cite{HNS13,JZ17,MV16}), and modified quantum Wronskian equation can be used to address it.

The new actions of the modified operators on the weight vectors deserve to be studied in details. They also appear in the context of the separation of variable by introduction of the~$B^{\rm good}$ operators~\cite{Bgood} (see also recent work of two of the authors~\cite{BS18} that considers the~$\mathfrak{gl}_2$ case from the point of view of the ABA). These actions provide formulas for the development of the modified Bethe vector in terms of the original~$t_{12}(u)$ creation operator. The multiple action of the modified operators should lead to generalization of the results of~\cite{BPRS12b}. These multiple action formulas are very useful for the calculation of Bethe vector's scalar products, form factors, and correlation functions and will be given in a forthcoming publication. In the framework of the MABA for the twisted XXX models, these multiple actions lead to new compact formulas for the scalar products of the modified Bethe vectors and should open a path to prove the modified determinant formulas conjectured in~\cite{BP15}.

An important step should be to achieve the MABA for the twisted XXX $\mathfrak{sl}_3$ spin chain and to other types of boundary conditions or other higher rank algebras. For a lower block twist, there is no new difficulties, as the first step of the nested Bethe ansatz preserves the ${\rm U}(1)$ symmetry. However, for the upper block twist some new tools have to be found.

\appendix

\section[Commutation relations of the $t_{ij}(u)$ and multiple actions on the Bethe vector]{Commutation relations of the $\boldsymbol{t_{ij}(u)}$ and multiple actions\\ on the Bethe vector}\label{ComY}

The RTT relation \eqref{RTT} yields the following commutation relations:
\begin{gather*}%\label{com1}
t_{ij}(v)t_{ij}(u)=t_{ij}(u)t_{ij}(v),\\
t_{ij}(v)t_{ik}(u)=f(u,v)t_{ik}(u)t_{ij}(v)+g(v,u)t_{ik}(v)t_{ij}(u),\\
%\label{comsl22}
t_{ij}(v)t_{kj}(u)=f(v,u)t_{kj}(u)t_{ij}(v)+g(u,v)t_{kj}(v)t_{ij}(u).
\end{gather*}
They imply the following actions on the products of $M$ operators:
\begin{gather}%\label{comsl21mul}
t_{ij}(v)t_{ik}(\bar u)=f(\bar u,v)t_{ik}(\bar u)t_{ij}(v)+\sum_{i=1}^M g(v,u_i)f(\bar u_i,u_i)t_{ik}(v)t_{ik}(\bar u_i)t_{ij}(u_i),\nonumber\\
\label{comsl22a}
t_{ij}(v)t_{kj}(\bar u)=f(v,\bar u)t_{kj}(\bar u)t_{ij}(v)+\sum_{i=1}^M g({ u_i,v})f(u_i,\bar u_i)t_{{ kj}}(v)t_{{ kj}}(\bar u_i)t_{{ ij}}(u_i).
\end{gather}
The same commutation relations are valid for the modified operators $\nu_{ij}(u)$.

\section[SoV basis for the modified creation operator $\nu_{12}(z)$]{SoV basis for the modified creation operator $\boldsymbol{\nu_{12}(z)}$}\label{simplenu}

Let us construct the Separation of variables basis \cite{Nic13} of the modified creation operator $\nu_{12}(z)$. We assume that the inhomogeneity parameters $\bar \theta$ are generic complex numbers and that the product $AB$ has non zero entries.

Let us introduce a vector
\begin{gather*}
|y\>=\cA^{-1}_1\cA^{-1}_2\cdots \cA^{-1}_N\rvech,
\end{gather*}
where
\begin{gather*}
\cA_i=a_0 \exp{\big(a^+S_i^{+}+a^-S_i^{-}+2 a_3S_i^{3}\big)},
\end{gather*}
and the parameters $\{a_0,a^+,a^-,a_3\}$ are fixed by the equality
\begin{gather*}
A=\sqrt\mu\left(\begin{matrix} 1&\dfrac{\rho_2}{\km}\\ \dfrac{\rho_1}{\kp}&1\end{matrix}\right)=a_0 \exp{\big(a^+\sigma^{+}+a^-\sigma^{-}+a_3\sigma^{3}\big)}.
\end{gather*}
Using $\mathfrak{gl}_2$ invariance we can show that
\begin{gather*}
\nu_{12}(z)|y\>=\frac{\mu}{\km}(\rho_1 + \rho_2)\lambda_2(z)|y\>.
\end{gather*}
Indeed, we have
\begin{gather*}
\nu_{12}(z)|y\> = \Tr(E_{21}A T(z) B)\cA^{-1}_1\cA^{-1}_2\cdots \cA^{-1}_N\rvech
 = \cA^{-1}_1\cA^{-1}_2\cdots \cA^{-1}_N\Tr(E_{21}T(z)AB)\rvech\\
\hphantom{\nu_{12}(z)|y\>}{} =(AB)_{12}\lambda_2(z)|y\>.
\end{gather*}
Consider a set of vectors
\begin{gather*}
 |Y(\bar k)\>=\prod_{i=1}^N\prod_{j=1}^{k_i}\nu_{22}\left( \theta_i+c\left(s_i+\frac{1}{2}-j\right) \right)|y\>,
\end{gather*}
where $\bar k$ is a set of $N$ integers such that $k_i\in\{0,1,\dots, 2 s_i\}$ for $i=1,\dots,N$. Then it follows from the commutation relation~\eqref{comsl22a} that
\begin{gather*}
\nu_{12}(z)\prod_{j=1}^{n}\nu_{22}( u-c j )=\left(\prod_{j=1}^{n}f(z,u-c j)\right)\left(\prod_{j=1}^{n}\nu_{22}( u-c j )\right)\nu_{12}(z)\nonumber\\
\qquad {} + g(u-c,z)\left(\prod_{j=2}^{n}f(u-c,u-c j)\right)\nu_{22}(z)\left(\prod_{j=2}^{n}\nu_{22}( u-c j )\right)\nu_{12}(u-c).
\end{gather*}
 Here we have used $f(x-c,x)=0$ and $\lambda_2\big(\theta_i+c\big(s_i-\frac{1}{2}\big)\big)=0$. Hence, the action of $\nu_{12}(z)$ on~$|Y(\bar k)\>$ reads
\begin{gather*}
\nu_{12}(z)|Y(\bar k)\>=\frac{\mu}{\km}(\rho_1 + \rho_2)\frac{F(z)}{\Lambda_{12}(z,\bar k)}|Y(\bar k)\>,
\end{gather*}
where $F(z)$ given by (\ref{F-rep}) and
\begin{gather*}
\Lambda_{12}\big(z,\bar k\big)=\prod_{i=1}^{N} \left(\prod_{j=1}^{k_i} h(z-c,\theta_i+c\left(s_i+\frac{1}{2}-j\right)\right)\left(\prod_{j=k_i+1}^{2s_i} h(z,\theta_i+c\left(s_i+\frac{1}{2}-j\right)\right),
\end{gather*}
with $h(u,v)=f(u,v)/g(u,v)$.

\begin{rmk}Using $\mathfrak{gl}_2$ invariance we can show that
\begin{gather*} \big\langle \hat{0}|\cA^{-1}_1\cA^{-1}_2\cdots \cA^{-1}_N|Y(\bar k)\big\rangle =\prod_{i=1}^N\prod_{j=1}^{k_i}(AB)_{22}\lambda_1\left( \theta_i+c\left(s_i+\frac{1}{2}-j\right) \right)\neq 0.
\end{gather*}
 Thus, all the vectors under consideration are non zero.
\end{rmk}

Thus, we have constructed $\prod\limits_{i=1}^N(2s_i+1)$ different eigenvectors of the modified creation operator with different eigenvalues. As the vectors are independent and the representation is finite (being of dimension $\prod\limits_{i=1}^N(2s_i+1)$), this implies that the modified creation operator has simple spectrum and it is invertible. Thus, we have{\samepage
\begin{gather*}
F(z)(\nu_{12}(z))^{-1}|Y(\bar k)\>=\frac{\km}{\mu (\rho_1 + \rho_2)}\Lambda_{12}(z,\bar k)|Y(\bar k)\>.
\end{gather*}
Therefore, $F(z)(\nu_{12}(z))^{-1}$ is a polynomial in $z$ of degree $z^S$.}

Observe that the modified creation operator
\begin{gather*}
\nu_{12}(z)=\Tr(V_{12}T(z)), \qquad V_{12}=\mu\begin{pmatrix} \dfrac{\rho_1}{\km}& \dfrac{\rho_1\rho_2}{(\km)^2}\vspace{1mm}\\
1&\dfrac{\rho_2}{\km}\end{pmatrix}
\end{gather*}
is a null twisted transfer matrix, because $\Det(V_{12})=0$. Since we can consider $\rho_i$, $\mu$, and $\km$ as free parameters in our construction, the result applies to any null twisted transfer matrix.

\subsection*{Acknowledgements} S.B.\ thanks R.~Pimenta, N.~Cramp\'e, P.~Baseilhac, V.~Pasquier, D.~Serban, A.~Faribault for discussions. N.S.\ would like to thank the hospitality of the Institute de Physique Th\'{e}orique at
the CEA de Saclay where a part of this work was done. S.B.\ and B.V.\ would like to thank the hospitality of the LMPT of Tours where a part of this work was done. S.B.\ was supported by a~public grant as part of the
Investissement d'avenir project, reference ANR-11-LABX-0056-LMH, LabEx LMH. N.S.\ was supported by the Russian Foundation RFBR-18-01-00273a.

\pdfbookmark[1]{References}{ref}
\LastPageEnding

\end{document}